\documentclass[10pt]{article} 
\usepackage[preprint]{tmlr}

\usepackage{hyperref}
\usepackage{url}

\usepackage{amssymb}
\usepackage{amsmath}
\usepackage{graphicx} 
\usepackage{amsthm} 
\usepackage{algorithm}
\usepackage{algorithmic}
\usepackage{natbib}
\usepackage{mathtools}
\usepackage[inline]{enumitem}
\usepackage{graphicx}
\usepackage{subcaption}
\usepackage{svg}
\usepackage{multirow}
\usepackage{booktabs}
\usepackage{placeins}

\title{ARMS: Automatic Reward Shaping for Sparse-Reward Multi-Agent Reinforcement Learning}

\author{\name Elie Abboud \email eliabboud1000@gmail.com \\
      \addr Department of Marine Technologies\\
      University of Haifa
      \AND
      \name Oren Gal \email orengal@univ.haifa.ac.il \\
      \addr Department of Marine Technologies\\
      University of Haifa}

\newcommand{\defeq}{\overset{\text{def}}{=}}
\newcommand{\Id}{\mathcal{I}} 
\newcommand{\Sd}{\mathcal{S}} 
\newcommand{\Ad}{\mathcal{A}} 
\newcommand{\Od}{\mathcal{O}} 
\newcommand{\Rd}{\mathcal{R}} 
\newcommand{\Td}{\mathcal{T}} 
\newcommand{\Dd}{\mathcal{D}} 
\newcommand{\ja}{\mathbf{a}} 
\newcommand{\jo}{\mathbf{o}} 
\newcommand{\jr}{\mathbf{r}} 
\newcommand{\tshaped}{\text{shaped}} 

\theoremstyle{plain}
\newtheorem{theorem}{Theorem}[section]

\newtheorem{corollary}[theorem]{Corollary}

\theoremstyle{definition}

\theoremstyle{remark}

\def\month{MM}  
\def\year{2026} 
\def\openreview{\url{https://openreview.net/forum?id=XXXX}} 

\newcommand\blfootnote[1]{%
  \begingroup
  \renewcommand\thefootnote{}\footnote{#1}%
  \addtocounter{footnote}{-1}%
  \endgroup
}

\begin{document}

\maketitle

\blfootnote{The authors used LLMs as an assistive tool for language editing, LateX troubleshooting and code generation. The authors carefully reviewed and tested the output.}

\begin{abstract}
Sparse rewards are a major bottleneck in multi-agent reinforcement learning (MARL), where simultaneous learning induces non-stationarity and makes reward design especially delicate. Reward shaping can accelerate learning, but in the multi-agent setting it must preserve the strategic structure of the problem rather than merely improve short-term optimization. We propose \textbf{A}utomatic \textbf{R}eward-shaping in \textbf{M}ulti-agent \textbf{S}ystems (\textbf{ARMS}), a self-supervised reward shaping framework for MARL that learns dense shaping signals from sparse environmental rewards through trajectory ranking. Since single-agent trajectory-ranking guarantees do not directly transfer to MARL, we reformulate policy invariance through conditional best-response reasoning, and show that if certain conditions hold, then using shaping rewards preserves each agent's best-response set under fixed opponent policies, and consequently preserve the set of Nash equilibria. Guided by this perspective, ARMS alternates between policy learning and reward learning while sharing shaping parameters across agents for effciency. Experiments in a partially observable multi-agent pathfinding domain show that ARMS improves sampling efficiency under increasing reward sparsity and agent count, generalizes to unseen environments, and reveals a MARL-specific failure mode in which limited exploration and coupled policy--reward dynamics induce oscillatory behavior. Increasing exploration mitigates this effect and stabilizes learning. To the best of our knowledge, ARMS is the first automatic reward shaping framework for MARL whose design is motivated by a game-theoretic equilibrium-preservation result.
\end{abstract}

\section{Introduction}

Multi-Agent Reinforcement Learning (MARL) has seen substantial progress, which is evident from the increasing number of papers and surveys in the field in recent years \citep{VDN2018, QMIX2020, MARL-Review2018, MARL-Review2021, MARL-Review2023, MARL-Communication-Survey2024, MARL-Review-Cooperative2023}. A central challenge in this setting is learning effectively from sparse rewards.

Sparse rewards are significantly more problematic in MARL due to non-stationarity and co-adaptation between agents, an element not present in single agent reinforcement learning (SARL). A common way to address sparse rewards is \emph{reward shaping}, namely augmenting the training signal with additional rewards that guide learning toward desirable behavior, or replacing the reward completely. Existing reward shaping methods are typically developed for single-agent settings, and extending them to MARL is non-trivial because the shaping signal must remain meaningful under coupled learning dynamics without altering the underlying strategic structure of the problem. In particular, reward shaping is risky because it may alter the set of optimal policies in SARL, or, in the multi-agent setting, the set of Nash equilibria~\citep{Nash1951}. For this reason, reward shaping methods that preserve Nash equilibria are especially attractive.

In reinforcement learning (RL), sparse reward functions are often easier to specify than dense ones, but they provide weak supervision and can substantially slow learning. Reward shaping aims to transform this sparse feedback into a denser training signal without changing the underlying solution of the problem. In MARL, this is especially delicate because multiple agents learn and act simultaneously in a shared environment, continually adapting their policies to one another as learning progresses. This gives rise to the inherent MARL challenge of non-stationarity \citep[Section~5.4.1]{marl-book}, under which the environment appears non-stationary from the perspective of each individual agent. Moreover, unlike SARL where the relevant solution is an optimal policy, in MARL the relevant solution is defined through mutual best responses and Nash equilibria: no agent can increase its expected return\footnote{The return is the cumulative reward obtained in a reinforcement learning problem.} by unilaterally altering its policy.

Designing such an informative dense reward is challenging even in single-agent settings \citep{ApprentLearning, SARLBook2018}. In multi-agent environments, this challenge becomes even harder, since the reward should encourage coordinated behavior between agents while preserving the set of Nash equilibria of the underlying problem \citep{MARL-Review-Cooperative2023}. This is precisely why reward shaping is both appealing and riskier in MARL: it has the potential to accelerate learning, but should not induce a shaping signal which changes the agents' objectives \textit{across all agents}.

Traditionally, \textit{reward shaping} refers to the hand-design of an additional reward signal by an expert, using domain knowledge or intuition to guide learning toward optimal policies. See \citet{PolicyInvariance1999} and \citet{PolicyInvariance-MultiAgent2011} for more information.

However, a more recent framework, called Self-Supervised Online Reward Shaping (SORS) \citep{SORS2021}, which removes the need for hand-designing reward was developed for SARL. The main idea is to use a parameterized method to infer a dense reward function by ranking trajectories based on the environment's sparse reward. Roughly speaking, the paper proposes a framework for training a parameterized reward function and shows theoretically that any two reward functions which similarly rank every pair of trajectories also induce the same set of optimal policies - which is the main motivation for inferring a reward function based on trajectory ranking.

In this paper, we propose \textbf{A}utomatic \textbf{R}eward-shaping in \textbf{M}ulti-agent \textbf{S}ystems (\textbf{ARMS}), a framework for automatic reward shaping in MARL (Algorithm \ref{alg:ARMS}, Section \ref{sec:method}). ARMS is built around a multi-agent view of trajectory-based reward learning, in which policy invariance is formulated through conditional best-response reasoning rather than global optimality (Section \ref{sec:NashInvariance}). The framework alternates between reinforcement learning and reward-learning steps, using sparse environmental rewards to supervise trajectory ranking. We further examine how multi-agent non-stationarity affects reward learning and training stability (Section~\ref{sec:agent-coordination-generalization} and~Section~\ref{sec:avoiding-suboptimal-policy-convergence}). To the best of our knowledge, ARMS is the first automatic reward shaping framework for MARL whose design is motivated by a game-theoretic equilibrium-preservation result.

While trajectory-ranking approaches are effective in single-agent reinforcement learning, they do not directly extend to multi-agent settings due to three fundamental challenges. First, in MARL, trajectory quality depends on the joint policy of all agents, making it impossible to evaluate or rank trajectories independently of other agents’ behaviors. Second, the learning process is inherently non-stationary, as agents simultaneously update their policies, causing the distribution of trajectories (and their induced rankings) to shift over time. Third, unlike the single-agent case, MARL lacks a global notion of optimality: there is no universally optimal policy or trajectory ordering, but rather a set of interdependent best-response solutions. As a result, the theoretical guarantees underlying SORS, which rely on consistent global trajectory rankings, no longer hold in this setting. These challenges necessitate a fundamentally different formulation of policy invariance for trajectory-based reward shaping in MARL.

Our contributions are threefold:
\begin{enumerate}
    \item \textbf{A multi-agent formulation of policy invariance for trajectory-based reward shaping.}
    The standard trajectory-ranking guarantees from the single-agent setting do not directly apply in MARL due to joint policy dependence and the absence of a global optimality notion. To address this, we reformulate policy invariance through conditional best-response reasoning, proving that total-order-equivalent shaping rewards preserve each agent's best-response set under fixed opponent policies, and consequently preserve the set of Nash equilibria. Section~\ref{sec:NashInvariance} presents this result.

    \item \textbf{ARMS: a self-supervised reward shaping framework for MARL.} 
    We propose ARMS, an online reward learning method for MARL that leverages trajectory ranking while accounting for the non-stationary and coupled nature of multi-agent learning. The framework learns a shared shaping signal across agents, and its design is motivated by the conditional best-response guarantee above. Section~\ref{sec:method} introduces the framework.

    \item \textbf{Empirical analysis of reward learning under multi-agent dynamics.} 
    We show that ARMS improves sampling efficiency as the number of agents increases and as rewards become sparser, while also generalizing to unseen environments. In addition, we identify a MARL-specific failure mode in which coupled policy--reward dynamics can induce oscillatory behavior under limited exploration, causing the shaping model to reinforce suboptimal cyclic behaviors. We show that increasing exploration mitigates this effect and stabilizes learning. Section~\ref{sec:experiments} details these experiments.
\end{enumerate}

In our experiments, we use a grid-world multi-agent pathfinding domain to enable interpretable analysis, since coordination behaviors are directly observable through agent interactions and solution quality is readily quantifiable. Each agent only observes local information, so the environment is modeled as a partially observable Markov game. Finally, we evaluate ARMS with two MARL backbones, IPPO and MAPPO, demonstrating that its benefits are not specific to a single base algorithm.

\subsection{Paper Organization}
The remainder of the paper is organized as follows: Section~\ref{sec:RelatedWorks} reviews related works, Section~\ref{sec:background} provides the necessary terminology and notation used in this work, specifically for multi-agent reinforcement learning (MARL) and Reward Shaping, Section~\ref{sec:NashInvariance} formalizes reward function equivalence and Nash Equilibria invariance in the multi-agent setting and establishes its role within the proposed framework, Section~\ref{sec:method} introduces the ARMS framework in detail and provides its pseudo-code implementation, Section~\ref{sec:experiments} presents empirical results demonstrating the effectiveness of ARMS against no-shaping and potential-based reward shaping (PBRS) in multiple sparse-reward settings across different RL backbones, and provides evidence that increased exploration mitigates premature convergence to suboptimal behaviors. Section~\ref{sec:future-work} concludes the paper and discusses future work.

\section{Related Works} \label{sec:RelatedWorks}

\subsection{Reward Shaping in SARL and MARL}

\textbf{In single agent reinforcement learning}, Reward shaping refers to the modification of the reward signal to accelerate learning. The modification is typically an incorporation of prior-knowledge into the task which reduces the number of suboptimal actions made to select the optimal action \citep{DriveBicycleShaping1998, PolicyInvariance1999}.

One way of doing so is by introducing an additive term to the original reward function. Let $\Rd(s,a,s') \in \mathbb{R}$ denote the reward function of the MDP underlying the RL problem, where the input tuple $(s,a,s')$ indicates the state-action-next state and let $F(s,a,s') \in \mathbb{R}$ denote a shaping term. Then it is replaced with
\[
\Rd'(s,a,s') \defeq \Rd(s,a,s') + F(s,a,s') 
\]

Reward shaping is risky because it can change the set of optimal policies. To deal with this problem potential-based reward shaping (PBRS) was proposed \citep{PolicyInvariance1999} which keeps the set of optimal policies unchanged. The form of the shaping term is simply:
\[
F(s,a,s')=\gamma \phi(s')-\phi(s)
\]
where $\gamma$ is the discount factor in the RL problem and $\phi(s) \in \mathbb{R}$ is any function.

\citet{SORS2021} proposes Self-Supervised Online Reward Shaping (SORS), which completely replaces the reward function with the shaped term:
\[
\Rd'(s,a,s')=F_\theta(s,a,s')
\]
where $F_\theta$ is the shaping term parameterized by $\theta$. The framework alternates between two phases: (a) a reinforcement learning phase, in which the agent is trained using the updated reward function, and (b) a reward update phase, in which the reward function is refined based on collected experiences during phase (a). The justification for completely replacing the original reward is that, if two reward functions define the same ordering over pairs of trajectories, then the two reward functions admit the same set of optimal policies (see Theorem \ref{thm:single-agent-total-order} for a formal statement).

\textbf{In the multi-agent setting}, each agent has its own policy\footnote{Generally, agents have distinct policies; parameter sharing uses a single shared parameterization across agents.} and thus instead of learning an optimal policy, a compromise must be made and so agents typically learn a Nash equilibrium \citep{Nash1951}. The result on potential-based reward shaping (PBRS) was extended to these settings and shown to not alter the Nash-Equilibrium \citep{MASPotentialShaping2011}.

\subsection{Learning a Reward Function from Preference Ranking}

Several prior works consider learning an inferred reward function from human preferences or ranking over trajectories, in the single-agent setting. 

One work, \citet{RewardShapingHumanPreferences2017}, fits a reward function based on human preferences. More concretely, they use a policy to generate pairs of trajectories and query the human on their preferences. The RL algorithm uses the updated reward network to train its agent. And \citet{RewardShapingAtari2018} build on the aforementioned work to use an initial set of expert trajectories to train an initial policy rather than start from a random policy. 

Another work, Self Supervised Online Reward Shaping (SORS) mentioned earlier, which uses an adaptation of the T-REX algorithm \citep{RewardShapingInverseRLTRex2019} for its reward inference part of the algorithm, learns a parameterized reward function online by ranking pairs of trajectories based on the original (sparse) reward. The reward network is then trained on the binary classification loss over these preferences. They show that their method collects higher reward than a baseline method throughout training (specifically Soft-Actor Critic) on a set of reward-sparse tasks.

Finally, \citet{Bui2025PreferenceGuidedLF} also addresses sparse-reward MARL by converting episodic outcomes into trajectory preferences and using them to learn an implicit reward model within a value-decomposition and dual-advantage MAPPO framework. In contrast, our method learns shaping rewards directly from trajectory ordering, and its central theoretical contribution is different: as shown in Section~\ref{sec:NashInvariance}, we establish a game-theoretic invariance result showing that, under total-order equivalence, best responses and therefore the set of Nash equilibria are preserved. This invariance result is the main theoretical motivation for our algorithmic design.

\section{Background and Definitions} \label{sec:background}

\subsection{Partially Observable Multi-Agent Reinforcement Learning}
We consider the Partially Observable Markov Decision Process (POMDP) in which multiple agents must make decisions simultaneously to maximize their ego-centric rewards: \newline
We denote the multi-agent reinforcement learning (multi-agent RL) with the tuple $M \defeq <\Id,\Sd,\Ad,\Rd,\Td,\Od,\gamma>$, in which:
\begin{itemize}
    \item $\Id=\{1,\ldots,N\}$ is the set of agents,
    \item $\Sd$ is the finite set of states of the environment.
    \item $\Ad=\{\Ad_1 \ldots,\Ad_N\}$ is the finite set of actions for all agents,
    \item A set of reward functions $\Rd\defeq\{\Rd_1,\ldots,\Rd_N\}$. Concretely, $\Rd_i:\Sd \times A \times \Sd \to \mathbb{R}$ is the reward function for agent $i$ where $A \defeq \Ad_1 \times \ldots \times \Ad_N$ where $\ja \in A$ denotes the \textbf{joint} actions across $N$ agents,
    \item $\Td:\Sd \times A \times \Sd \to [0,1]$ is the transition probability which defines the environment dynamics, for which $\Td(s,\ja,s')$ to be interpreted as the probability to transition from state $s$ to state $s'$ when the joint action $\ja$ was taken,
    \item A set of observation functions $\Od\defeq\{\Od_1,\ldots,\Od_N\}$. Concretely agent $i$'s observation function is given by $\Od_i:A \times \Sd \times \Od_i \to [0,1]$ where $\Od_i(s, \ja, o)$ denotes the probability that agent $i$ observes $o_i$ given that the environment was in state $s$ and the joint action $\ja$ was taken,
    \item and $\gamma\in[0,1]$ is the discount factor, which determines the relative importance of future rewards per agent \footnote{Specifically, rewards obtained $t$ time steps into the future are weighted by a factor of $\gamma^t$, thereby encouraging either short-term ($\gamma$ close to $0$) or long-term ($\gamma$ close to $1$) planning.}. We assume a common discount factor shared by all agents.
\end{itemize}

At each timestep $t \in \{1,\ldots,T\}$, where $T$ denotes the terminal timestep, the environment generates a state of the environment $s_t \in \Sd$, and each agent $i \in \Id$ receives its observation $o^i_t \in \Od_i$ with probability given by its observation function $\Od_i(\ja_{t-1},s_t,o^i_{t})$ and selects an action $a^i_t \in \Ad_i$ (agent $i$'s selection at timestep $t$). All agent selections at timestep $t$ form the joint action $\ja_t=(a^1_t,\ldots,a^N_t) \in A$. This joint action leads to a (stochastic) change in the environment generating a new state $s_{t+1}$ with probability $\Td(s_t,\ja_t,s_{t+1})$.  The environment also produces a \textit{joint reward} $\jr_t=(r_t^1,\ldots,r_t^N)\in\mathbb{R}^N$, where $r_t^i$ denotes the reward received by agent $i$ at timestep $t$.

Moreover, at each timestep $t$, each agent selects its action $a^i_t$ with probability given by its stochastic policy $\pi_i(a^i_t|h^i_t)$, where $h^i_t \defeq (o^i_0,\ldots,o^i_t)$, called the agent's \textit{history}, is all of the agent's observation up to timestep $t$.

The return for agent $i$ in a trajectory 
$\tau=(s_1,\ja_1,\jr_1,s_2,\ja_2,\jr_2,\ldots,s_T)$ is defined as
\[
\Rd_i(\tau) \defeq \sum_{t=1}^{T} \gamma^{t-1}\jr_t^i.
\]

Depending on the context, the state variables $s_t$ in $\tau$ may be replaced by the corresponding agent observations.

A \textbf{reward-free (joint) trajectory} is defined as $\tau\defeq(s_1,\ja_1,s_2,\ja_2,\ldots,s_T)$. When it is clear the trajectory doesn't include rewards, we just refer to $\tau$ as a (joint) trajectory, and define the joint observations of agents at timestep $t$ to be $\jo_t\defeq(o_t^1,\ldots,o_t^N)$.

To complete the definition of the POMDP, for any agent $i$ define $a^i_0=\emptyset$ and $o^i_0\defeq \emptyset$.

With a slight abuse of notation we sometimes use $\Td(s'|s,\ja)$ to refer to $\Td(s,\ja,s')$. This emphasizes that the selection of the next state is dependent on the previous action and previous state. In a similar fashion, we sometimes use $\Od_i(o^i_t|\ja_{t-1},s_t)$ instead of $\Od_i(\ja_{t-1},s_t,o^i_{t})$.

\subsection{Reward Shaping}

Given a MARL instance $<\Id,\Sd,\Ad,\Rd,\Td,\Od,\gamma>$, \textit{reward shaping} is the process of altering the original reward function for each agent to obtain a new (shaped) reward function either by replacing the original reward or augmenting it. 

More concretely, let $\Rd_i(s,\ja,s')$ be agent $i$'s reward function and let $F_{i,\theta}(s,\ja,s')$ be a shaping term for agent $i$ parametrized by a shared $\theta$ across agents. Reward shaping may then be defined either as an additive modification,
\[
\Rd_{i,\text{shaped}}(s,\ja,s') = \Rd_i(s,\ja,s') + F_{i,\theta}(s,\ja,s'),
\]
or as a replacement of the original reward,
\[
\Rd_{i,\text{shaped}}(s,\ja,s') = F_{i,\theta}(s,\ja,s').
\]

While training multiple agents in parallel it may encode the relationship between agent interactions implicitly, specifying how each agent should interact based solely on its local observations irrespective of its identity. Although $F_{i,\theta}$ is modeled as a function of the state-action-next state triple, in practice it will depend only on the local agent observations and actions. Finally, while the main goal of reward shaping is speeding up learning, using it may change the learned policy.

\section{Nash Equilibria Invariance} \label{sec:NashInvariance}

In a previous work, it is shown that in single-agent reinforcement learning it is possible to define a preference oracle over trajectories, which is a binary relation $\leq_p$ over trajectories. With a preference oracle $p$, one can rank trajectories:

$\tau_1\leq_p\ldots\leq_p\tau_k\leq_p\ldots$,
where the preference oracle can be defined by any deterministic reward function $\Rd$:

$\tau_i \leq_\Rd \tau_j \iff \Rd(\tau_i) \leq \Rd(\tau_j)$.

Define two deterministic reward functions $\Rd,\Rd'$ "total-order equivalent" if and only if for all trajectory pairs $\tau_i,\tau_j$:

$\tau_i \leq_{\Rd} \tau_j \iff \tau_i \leq_{\Rd'} \tau_j$.

Importantly, it is possible to show that if two deterministic reward functions $\Rd,\Rd '$ are total-order equivalent then they will induce the same set of optimal-policies. Let us formally restate the theorem:

\begin{theorem}[Single-Agent Total Order Equivalency (Restated from SORS  \citet{SORS2021})]\label{thm:single-agent-total-order}
Given a deterministic\footnote{That is, the transition dynamics for the MDP are deterministic.} reward-free (single agent) MDP $M=<\Sd,\Ad,\Td,\gamma>$ if two reward functions $\Rd, \Rd '$ are total order equivalent, they will induce the same set of optimal policies.
\end{theorem}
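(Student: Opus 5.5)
Since the MDP is deterministic, fixing an initial state $s_1$, any policy $\pi$ induces a distribution over trajectories only through its own action randomization, and the expected return decomposes as $J_\Rd(\pi)=\sum_{\tau}P_\pi(\tau)\Rd(\tau)$, where $P_\pi(\tau)$ does not depend on the reward. The plan is to characterize optimal policies purely in terms of the trajectory ordering, and then use total-order equivalence to transfer the characterization from $\Rd$ to $\Rd'$.

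The first step is to show that the optimal value under $\Rd$ equals $\max_{\tau\in\mathcal{T}}\Rd(\tau)$, where $\mathcal{T}$ is the set of feasible trajectories. This set is finite because $\Sd$ and $\Ad$ are finite and the horizon is $T$. The upper bound holds because $J_\Rd(\pi)$ is a convex combination of trajectory returns. The bound is attained because in a deterministic MDP the deterministic policy that replays the action sequence of any maximizing trajectory generates exactly that trajectory. If there is an initial-state distribution, the same argument is applied per initial state. The second step then follows: $\pi$ is optimal for $\Rd$ if and only if $P_\pi$ is supported on $\mathcal{T}^*_\Rd\defeq\arg\max_{\tau}\Rd(\tau)$. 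Indeed, any positive mass on a strictly worse trajectory strictly lowers the convex combination below the maximum.

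The third step shows that $\arg\max$ sets are determined by the order alone. A trajectory $\tau$ lies in $\mathcal{T}^*_\Rd$ if and only if $\tau'\leq_\Rd\tau$ for all $\tau'$. By total-order equivalence this holds if and only if $\tau'\leq_{\Rd'}\tau$ for all $\tau'$, so $\mathcal{T}^*_\Rd=\mathcal{T}^*_{\Rd'}$. Combined with the second step, and since the support of $P_\pi$ is reward-independent, the sets of optimal policies under $\Rd$ and $\Rd'$ coincide.

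The main obstacle is the first two steps, because they depend on determinism and on optimality meaning maximal expected return from the start state. Under stochastic dynamics, expected returns compare lotteries over trajectories, and ordinal equivalence does not preserve them; I would remark that this is exactly why the hypothesis is needed. A second subtlety concerns a policy that is optimal from the start distribution but behaves arbitrarily at unreachable states. It is handled by defining optimality via the start-state objective, or, if per-state optimality is wanted, by applying the same argument with each state as the initial state, using trajectories starting there.
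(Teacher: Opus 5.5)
The paper does not prove this statement; it only restates it and cites SORS, so there is no in-paper argument to compare against. Your proof is correct and self-contained. The four pieces together give the claim: the upper bound via a convex combination, attainment by replaying a maximizing action sequence, the characterization ``$\pi$ is optimal iff $P_\pi$ is concentrated on $\arg\max_\tau \mathcal{R}(\tau)$'', and the observation that this $\arg\max$ set depends only on the order. The support characterization lets you cover stochastic policies, and the per-start-state version covers a random initial state. It also isolates exactly where determinism enters, namely the attainment step.

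There is one hypothesis you use without stating it, and you should make it explicit. Replaying the action sequence of a maximizing trajectory requires a policy that may depend on time or on the history, because that trajectory can revisit a state with different actions. This matches the paper's convention in its background section, where policies are functions of histories $h_t$, so your argument goes through. But the hypothesis is genuinely needed, not an artifact of your proof: for stationary policies $\pi(a\mid s)$ the statement is false. Take states $s,z$ with two decision steps and $\gamma=1$. At $s$, action $\ell$ stays at $s$ with reward $1$, and action $e$ moves to an absorbing $z$ with reward $\beta$; all rewards at $z$ are $0$. The returns of $\ell\ell$, $e$, $\ell e$ are $2$, $\beta$, $1+\beta$, so $\beta=2.5$ and $\beta=3$ give total-order-equivalent rewards. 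Yet with $p=\pi(\ell\mid s)$ the value is $J(p)=(1-\beta)p^2+p+\beta$, whose unique maximizer $p^*=1/(2(\beta-1))$ is $1/3$ in one case and $1/4$ in the other. One sentence in your first step fixing the policy class would close this.
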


Next, we adapt this theorem to the MARL setting. The main idea of the proof is that if all agents policies are fixed except for agent $i$, then switching out that single agent's reward function $\Rd_i$ to a total-order equivalent $\Rd_i '$ induces the same set of optimal policies for that agent.

\begin{theorem}[Multi-Agent Total Order Equivalency]\label{thm:multi-agent-total-order}
Consider the deterministic reward-free multi-agent POMDP $M=<\Id,\Sd,\Ad,\Td,\Od,\gamma>$. Fix an agent $i$ and fix the policies of all other agents $\pi_{-i}$.

Let $M_i^{\pi_{-i}}$ denote the induced single-agent environment faced by agent $i$ when other agents follow $\pi_{-i}$. Let $\Rd, \Rd '$ be total-order-equivalent over trajectories from $M_i^{\pi_{-i}}$ then they induce the same set of optimal policies for agent $i$ in $M_i^{\pi_{-i}}$. Equivalently, the set of optimal policies for agent $i$ against $\pi_{-i}$ is invariant under replacing $\Rd$ with a total-order-equivalent $\Rd '$ deterministic reward functions.
\end{theorem}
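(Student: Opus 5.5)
The plan is to reduce the statement to Theorem~\ref{thm:single-agent-total-order} by building the induced single-agent environment $M_i^{\pi_{-i}}$ explicitly. Then check that the hypotheses of the single-agent result hold there, or that its proof carries over.

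\textbf{Step 1: construct the induced environment.} With $\pi_{-i}$ fixed, define a single-agent decision process for agent $i$.
\begin{itemize}
\item Its states are the relevant histories: the environment state together with the other agents' histories $h^{-i}_t$, since $\pi_{-i}$ depends on these.
\item Its action set is $\Ad_i$.
\item Its transition kernel marginalizes over the other agents' actions:
\[
\Td_i(s'\mid s,a^i)=\sum_{\ja_{-i}}\pi_{-i}(\ja_{-i}\mid h^{-i})\,\Td(s'\mid s,(a^i,\ja_{-i})).
\]
\end{itemize}
A policy of agent $i$ in $M_i^{\pi_{-i}}$ is exactly a policy $\pi_i$ in $M$. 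Its expected return in $M_i^{\pi_{-i}}$ equals the expected return of the joint profile $(\pi_i,\pi_{-i})$ in $M$, because both induce the same distribution over joint trajectories. Hence ``optimal for agent $i$ against $\pi_{-i}$'' coincides with ``optimal in $M_i^{\pi_{-i}}$,'' which gives the ``equivalently'' clause.

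\textbf{Step 2: invoke Theorem~\ref{thm:single-agent-total-order}.} Apply it to $M_i^{\pi_{-i}}$ with the reward functions $\Rd,\Rd'$, which are total-order equivalent over its trajectories by hypothesis. This yields equal optimal-policy sets.

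\textbf{Main obstacle: determinism.} Theorem~\ref{thm:single-agent-total-order} assumes deterministic dynamics. $M_i^{\pi_{-i}}$ is deterministic only if $\pi_{-i}$ is deterministic, even when $\Td$ and the $\Od_j$ are. A further wrinkle is that agent $i$ acts on observations, which makes the problem a POMDP. I would handle these as follows.
\begin{itemize}
\item \textbf{Partial observability.} Regard agent $i$'s policy as a map from its histories to actions. In a deterministic environment with deterministic $\pi_{-i}$, each deterministic policy $\pi_i$ yields exactly one trajectory $\tau(\pi_i)$.
\item \textbf{Reduction to trajectories.} The return of $\pi_i$ is then $\Rd(\tau(\pi_i))$, so optimality amounts to choosing a $\leq_\Rd$-maximal reachable trajectory. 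Total-order equivalence makes the sets of $\leq_\Rd$-maximal and $\leq_{\Rd'}$-maximal trajectories identical, so the optimal deterministic policies coincide.
\item \textbf{Stochastic $\pi_i$.} A stochastic policy is optimal iff it puts all its mass on maximal trajectories; this step uses the finiteness of the trajectory set, which follows from finite $\Sd$, finite $\Ad_i$ and horizon $T$. That characterization is also invariant under total-order equivalence.
\item \textbf{Stochastic $\pi_{-i}$.} Here expected returns mix over trajectories, and mere order equivalence no longer preserves argmaxes of expectations. I would therefore state explicitly that the theorem is read with deterministic induced dynamics (deterministic $\pi_{-i}$), matching the ``deterministic'' qualifier in the statement. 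This restriction is the main point to pin down.
\end{itemize}
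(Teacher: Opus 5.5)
Your proposal is correct and takes the paper's route: build $M_i^{\pi_{-i}}$ on the augmented state $(s_t,h_t^{-i})$ with actions $\Ad_i$, check that it is deterministic, and apply Theorem~\ref{thm:single-agent-total-order}. Your two refinements are not a departure; they make precise what the paper's proof assumes silently. The first is your explicit restriction to deterministic $\pi_{-i}$: the paper gets a deterministic $\tilde{\Td}$ only by ``computing all other agent actions under their fixed policy,'' which already presupposes it. The second is your direct argument over $\leq_{\Rd}$-maximal reachable trajectories, which covers the fact that $\pi_i$ sees only $h_t^i$, not the augmented state. The restriction is genuinely needed: in a one-step game with $\pi_j$ uniform over $\{L,R\}$, returns $0,3,1,1$ for $(x,L),(x,R),(y,L),(y,R)$ under $\Rd$ and $0,3,2,2$ under $\Rd'$ are order-equivalent, yet the best response switches from $x$ to $y$.
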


\begin{proof}
The proof is structured by first showing that $M_i^{\pi_{-i}}$ is a deterministic single-agent MDP, and then apply Theorem \ref{thm:single-agent-total-order}. 

When other agent policies are fixed, agent $i$ is the only decision maker. Set $M_i^{\pi_{-i}}\defeq <\tilde \Sd,\tilde \Ad,\tilde \Td,\gamma>$, where $\tilde \Ad \defeq \Ad_i$ since agent $i$ is the only actor, the state representations at timestep $t$ denoted $\tilde s_t$ is simply the augmented state alongside the history of all agents $j\neq i$ up to timestep $t$: $\tilde s_t \defeq (s_t,h_t^{-i})$ - the histories are included to restore the Markov Property to the MDP since policies of agents $j \neq i$ may rely on histories to make an action selection, and finally $\tilde \Td$ simply transforms $(s_t,h_t^{-i})$ to $(s_{t+1},h_{t+1}^{-i})$ by computing all other agent actions under their fixed policy and advancing the state and histories deterministically. Observe that under these assumptions $\tilde \Td$ is deterministic.

Since $M_i^{\pi_{-i}}$ is a deterministic reward-free (single-agent) MDP and $\Rd, \Rd '$ are total-order-equivalent over trajectories from $M_i^{\pi_{-i}}$, then Theorem \ref{thm:single-agent-total-order} implies that $\Rd, \Rd '$ induce the same set of optimal policies in $M_i^{\pi_{-i}}$.  
\end{proof}

As an immediate consequence, preservation of each agent's best-response set implies preservation of the Nash equilibrium set.

\begin{corollary}[Nash Equilibrium Invariance]\label{cor:nash-invariance}
Suppose that for each agent $i \in \Id$ and for a fixed joint policy of other agents $\pi_{-i}$, the reward functions $\Rd_i$ and $\Rd_i'$ are total-order equivalent over trajectories in the induced single-agent environment $M_i^{\pi_{-i}}$. Then the induced games under $\{\Rd_i\}_{i\in\Id}$ and $\{\Rd_i'\}_{i\in\Id}$ have the same set of Nash equilibria.
\end{corollary}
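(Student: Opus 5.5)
The plan is to reduce the corollary to Theorem~\ref{thm:multi-agent-total-order} using the definition of a Nash equilibrium as a profile of mutual best responses. For a joint policy $\pi=(\pi_i,\pi_{-i})$, $\pi$ is a Nash equilibrium of the game with rewards $\{\Rd_i\}_{i\in\Id}$ if and only if, for every $i\in\Id$, $\pi_i$ lies in the best-response set $BR_i(\pi_{-i};\Rd_i)$. This set is the set of optimal policies for agent $i$ in the induced single-agent environment $M_i^{\pi_{-i}}$ under reward $\Rd_i$. The same characterization holds with $\Rd_i'$. So it suffices to show that $BR_i(\pi_{-i};\Rd_i)=BR_i(\pi_{-i};\Rd_i')$ for every $i$ and every $\pi_{-i}$ that can appear in a candidate equilibrium.

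First I will fix an arbitrary joint policy $\pi$ and an arbitrary agent $i$. Reading the hypothesis as quantifying over all $\pi_{-i}$ (the phrase ``for a fixed joint policy of other agents $\pi_{-i}$'' must hold for each such $\pi_{-i}$ for the conclusion to make sense), $\Rd_i$ and $\Rd_i'$ are total-order equivalent over the trajectories of $M_i^{\pi_{-i}}$. Theorem~\ref{thm:multi-agent-total-order} then gives directly that the optimal policy sets of agent $i$ in $M_i^{\pi_{-i}}$ coincide under $\Rd_i$ and $\Rd_i'$, i.e.\ $BR_i(\pi_{-i};\Rd_i)=BR_i(\pi_{-i};\Rd_i')$.

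Second, I will conclude by a two-sided inclusion. If $\pi$ is a Nash equilibrium under $\{\Rd_i\}$, then $\pi_i\in BR_i(\pi_{-i};\Rd_i)=BR_i(\pi_{-i};\Rd_i')$ for all $i$, so $\pi$ is a Nash equilibrium under $\{\Rd_i'\}$. The converse follows by symmetry, since total-order equivalence is symmetric.

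The main obstacle is the quantifier issue, not the algebra. The hypothesis must hold for \emph{every} $\pi_{-i}$, or at least for every $\pi_{-i}$ arising as the opponent profile of some equilibrium of either game. If it held only for one fixed $\pi_{-i}$, the conclusion would fail. I will state this reading explicitly at the start of the proof.

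A secondary subtlety is that a policy of agent $i$ in $M_i^{\pi_{-i}}$ must be identified with a policy in the original game. The state $\tilde s_t=(s_t,h_t^{-i})$ in the proof of Theorem~\ref{thm:multi-agent-total-order} gives agent $i$ more information than its own history $h_t^i$, so optimal policies in $M_i^{\pi_{-i}}$ could exploit that extra information. I will handle this by restricting attention to the history-dependent policies $\pi_i(\cdot\mid h^i_t)$ admissible in $M$. The comparison in Theorem~\ref{thm:multi-agent-total-order} goes through trajectory returns, and total-order equivalence is a statement about trajectories, so it applies equally to any restricted policy class. Hence the equality of best-response sets holds within the admissible class, which is all the Nash characterization requires.
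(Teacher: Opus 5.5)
Your main line is the paper's own proof. You apply Theorem~\ref{thm:multi-agent-total-order} for every agent $i$ and every $\pi_{-i}$ to get identical best-response correspondences, then use that Nash equilibria are exactly the profiles of mutual best responses. Reading the hypothesis as quantified over all $\pi_{-i}$ is also what the paper's proof does. The step that would fail is your patch for the information mismatch, which the paper's proof passes over silently. Total-order equivalence does \emph{not} carry over to an arbitrary restricted policy class. Once the policies available to agent $i$ induce nondegenerate distributions over trajectories, best-responding means comparing expected returns, and an order-preserving change of returns need not preserve comparisons of expectations. Partial observability is exactly what produces such distributions. For example, let the initial state be $s_A$ or $s_B$ with probability $1/2$ each, both giving agent $i$ the same observation, and let $i$ act once, choosing $x$ or $y$. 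Take returns $1,1,0,2$ for $(s_A,x),(s_B,x),(s_A,y),(s_B,y)$ under $\Rd_i$ and $1,1,0,10$ under $\Rd_i'$. These are total-order equivalent, and in the augmented MDP (where the state is observed) they have the same optimal policy. Yet among own-history policies every mixture of $x$ and $y$ is a best response under $\Rd_i$ (expected return $1$), whereas only $y$ is one under $\Rd_i'$ (expected return $5$ versus $1$).

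What makes the restriction harmless is full determinism of the environment agent $i$ faces: deterministic initial state, transitions and observations, and a deterministic $\pi_{-i}$. Then every feasible trajectory is generated by an open-loop action sequence, which is an admissible own-history policy since the length of $h^i_t$ reveals $t$. Hence the optimal value over admissible (even stochastic) policies is the largest feasible trajectory return, and the best responses are exactly the admissible policies all of whose induced trajectories attain it. That set is determined by the ordering alone, so it is preserved. You should state this assumption and argue this way rather than appeal to an arbitrary policy class. The same example, with the random initial state replaced by an opponent randomizing uniformly over two simultaneous actions, shows that Theorem~\ref{thm:multi-agent-total-order} itself fails for stochastic $\pi_{-i}$; its proof needs $\tilde{\Td}$ to be deterministic. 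So your argument, like the paper's, actually establishes invariance of the equilibria in deterministic policy profiles, and your quantifier over every $\pi_{-i}$ should range over deterministic profiles.
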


\begin{proof}
By Theorem~\ref{thm:multi-agent-total-order}, for every agent $i$ and every fixed $\pi_{-i}$, the replacement $\Rd_i \mapsto \Rd_i'$ preserves the set of best responses. Hence each agent's best-response correspondence is unchanged. Since Nash equilibria are exactly the fixed points of these best-response correspondences, the set of Nash equilibria is unchanged.
\end{proof}

\subsection{Design Decisions and Applying Reward Function Equivalence in this work}

The single-agent notion of reward function equivalence does not transfer directly to MARL, since in the multi-agent setting the solution to the problem is no longer an optimal policy, but an equilibrium induced by mutual best responses. However, when the policies of the other agents are held fixed, each agent faces an induced single-agent problem. Under this conditional view, total-order-equivalent reward functions preserve each agent's best-response set, and, by Corollary~\ref{cor:nash-invariance}, preserve the set of Nash equilibria when the equivalence condition holds for all agents. This is the theoretical principle underlying our shaping design. Next, we describe the concrete design choices that operationalize this guarantee.

In this work, we use a parameterized reward function $F_\theta$, modeled as $F_\theta(s,\ja,s') \equiv F_\theta(o^i,a^i)$, where $o^i$ and $a^i$ are agent $i$'s observation and action. Each agent then defines its shaped reward by
\[
\Rd_{i,\tshaped}(s,\ja,s') \defeq F_\theta(o^i,a^i).
\]

While the theorem and corollary provide an equilibrium-preservation guarantee under exact total-order equivalence, in ARMS, we operationalize this through the reward-learning loss (See Equation \ref{eq:trajectory-ranking-loss}, Section \ref{sec:method}). Specifically, the loss encourages the learned shaping reward to be consistent with the preference ordering induced by the original reward. Since the loss is computed per agent, the reward function is optimized from an agent-wise perspective.

In our implementation all agents share the same policy network and the same reward-shaping parameters. By aggregating the resulting learning signal across agents, the shaping network is trained to infer rewards through agent-wise trajectory ordering. This shared and symmetric structure allows for more samples collected from various agents throughout training and makes scaling to larger agent counts easier, since a single network suffices regardless of the number of agents.

The choices above are specific to our implementation, but in principle, the reward shaper could process either each agent's observations separately or the joint observations of all agents. The former, which is our chosen setup, mirrors decentralized information flow, encourages reward primitives that generalize across agents, and, as noted above, increases the amount of training signal available to the reward-shaping objective, since each agent contributes its personal trajectory independently. The latter corresponds to a more centralized shaping design and is more suitable for learning a personalized reward per-agent or as a common reward for all agents\footnote{Common reward in MARL refers to the case where all agents receive the same reward scalar at each timestep.} which we leave as future work. Since the learned reward is used only during training, both choices are equally realistic to use; they differ primarily in the information structure used during reward-shaping optimization.

Finally, although the discussion above focuses on the design choices used in our implementation, Section~\ref{sec:method} presents the most general formulation of ARMS without assuming a specific information flow in the reward-shaper functions.

\section{Method} \label{sec:method}

We address the problem of sparse or delayed rewards in the multi-agent reinforcement learning (MARL) setting. Our key idea is to infer a dense reward functions for each agent that preserves the preference structure induced by the original sparse or delayed rewards. To this end, we introduce \emph{Automatic Reward-shaping in Multi-agent Systems} (ARMS), a general reward-shaping framework that is agnostic to the underlying reinforcement learning algorithm and can be seamlessly integrated into existing MARL pipelines.

Let $M = \langle \Id, \Sd, \Ad, \Rd=\{\Rd_1,\ldots,\Rd_N\}, \Td, \Od, \gamma \rangle$ denote a MARL problem with $N$ agents. Each agent $i \in \Id$ is associated with a parameterized shaped reward function $\Rd_{\theta_i}$, which is learned online during training.

\begin{algorithm}[t]
\caption{ARMS: Automatic Reward-shaping in Multi-agent Systems}
\label{alg:ARMS}
\let\tmlrAND\AND
\let\AND\relax
\begin{algorithmic}[1]
\REQUIRE MARL environment $M = \langle \Id, \Sd, \Ad, \{\Rd_1,\ldots,\Rd_N\}, \Td, \Od, \gamma \rangle$. 
\REQUIRE Length of each sampled trajectory $L$.
\REQUIRE Number of trajectory pairs to sample from buffer $K$.
\STATE \textbf{Notation:} $|\Id| = N$.
\STATE Initialize policy parameters $\{\pi_i\}_{i=1}^N$.
\STATE Initialize reward parameters $\{\theta_i\}_{i=1}^N$.
\STATE Initialize trajectory-pair buffer $\Dd \gets \emptyset$.

\WHILE{training not converged}

    \STATE // \textbf{Reinforcement Learning Phase:} \label{alg:ARMS:RL-Phase}
    \STATE Agents with policies $\{\pi_i\}_{i=1}^N$ interact with the environment using shaped rewards 
    $\{\Rd_{\theta_i}\}_{i=1}^N$, using any base RL algorithm for a fixed number of time-steps or episodes.
    \STATE Collect joint trajectories and store trajectory segments 
    $\tau$ in buffer $\Dd$.

    \STATE // \textbf{Reward-Shaping Phase:} \label{alg:ARMS:step-reward-shaping}
    \STATE Sample $K$ pairs of trajectory segments $(\tau_k, \tau_j)$ from $\Dd$, each of length $L$.

    \FOR{each agent $i \in \Id$}
        \STATE Update $\theta_i$ by minimizing loss $\mathcal{L}_i(\theta_i; \Dd)$ \hspace{1em} (Eq.~\ref{eq:trajectory-ranking-loss})
    \ENDFOR

\ENDWHILE

\RETURN Learned policies $\{\pi_i\}_{i=1}^N$ and shaped rewards $\{\Rd_{\theta_i}\}_{i=1}^N$
\end{algorithmic}
\let\AND\tmlrAND
\end{algorithm}

ARMS proceeds in two phases: a reinforcement learning phase and a reward-shaping phase. During the reinforcement learning phase, agents interact with the environment using their current shaped rewards $\Rd_{\theta_1},\ldots,\Rd_{\theta_N}$ and collect experience in the form of joint trajectories.

Specifically, let
\[
\tau_{t:t+L-1} = (\jo_t, \ja_t, \jo_{t+1}, \ja_{t+1}, \ldots, \jo_{t+L-1})
\]
denote a joint trajectory segment of length $L$, where $\jo_t = (o_t^1,\ldots,o_t^N)$ is the joint observation at timestep $t$. Collected trajectory segments are stored in a trajectory buffer $\Dd$.

During the reward-shaping phase, $K$ pairs of trajectory segments $(\tau_k,\tau_j)$ are sampled from $\Dd$. For each sampled pair, every agent $i$ performs a gradient update on its reward parameters $\theta_i$ using a binary classification loss that enforces consistency with the preference ordering induced by the original reward $\Rd_i$.

Formally, the loss used to train agent $i$'s shaped reward function $\Rd_{\theta_i}$ is defined as
\begin{equation}
\label{eq:trajectory-ranking-loss}
\begin{split}
\mathcal{L}_i(\theta_i; \Dd)
=
- \sum_{(\tau_k, \tau_j) \sim \Dd}
\Big[
&\mathbb{I}(\tau_k(i) \leq_{\Rd_i} \tau_j(i))\,
\log P(\tau_k(i) \prec \tau_j(i)) \\
&+
\big(1 - \mathbb{I}(\tau_k(i) \leq_{\Rd_i} \tau_j(i))\big)\,
\log P(\tau_k(i) \succ \tau_j(i))
\Big]
\end{split}
\end{equation}

where $\mathbb{I}(\cdot)$ is the indicator function, and $\tau_k(i),\tau_j(i)$ denote the portions of the joint trajectories corresponding to agent $i$'s observations and actions. This loss function has been used in other work to train a reward function with given pair-wise preference over trajectories \citep{SORS2021, RewardShapingHumanPreferences2017}.

The probability that trajectory $\tau_k$ is preferred over $\tau_j$ under the shaped reward is modeled using a softmax:
\begin{equation}
\label{eq:trajectory-preference-prob}
P(\tau_k \prec \tau_j)
=
\frac{\exp\!\left(\Rd_{\theta_i}(\tau_k)\right)}
{\exp\!\left(\Rd_{\theta_i}(\tau_k)\right)
+
\exp\!\left(\Rd_{\theta_i}(\tau_j)\right)}.
\end{equation}

Here, $\Rd_{\theta_i}(\tau)$ denotes the cumulative discounted shaped reward obtained by agent $i$ over trajectory $\tau$; for notational clarity we omit the agent index writing $\tau_k,\tau_j$ to denote $\tau_k(i),\tau_j(i)$.

\begin{figure}[t]
\centering
\includegraphics[width=\linewidth]{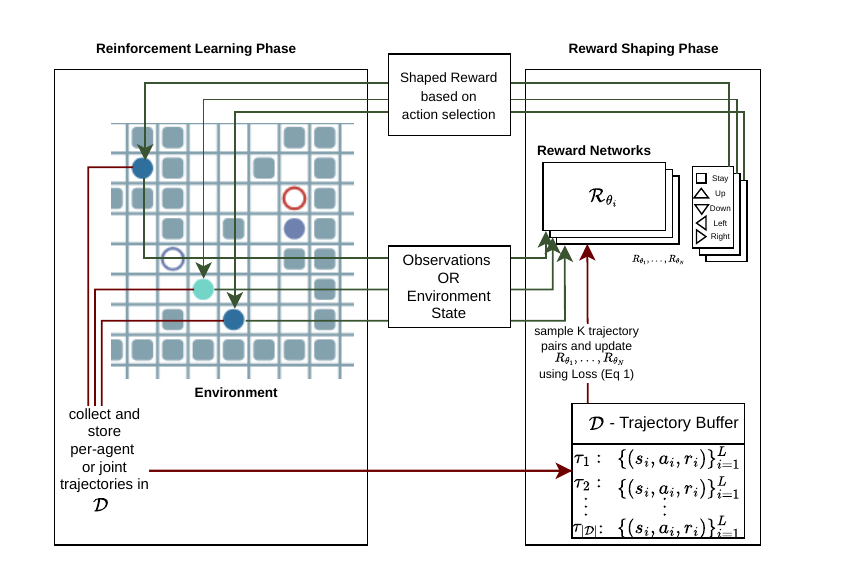}
\caption{Overview of the ARMS framework. Green arrows correspond to operations in the reinforcement learning phase, while red arrows correspond to reward-learning operations.}
\label{fig:ARMS-flow}
\end{figure}

The full algorithm is provided in Algorithm \ref{alg:ARMS} and Figure \ref{fig:ARMS-flow} shows the pipeline of ARMS schematically.

\section{Experiments}\label{sec:experiments}

\subsection{Experiment Setup}

We evaluate whether ARMS improves sampling efficiency and learning
under sparse reward feedback,
and whether it promotes coordinated behavior
in cooperative multi-agent reinforcement learning.
We test this in lifelong multi-agent path finding,
where agents must repeatedly reach assigned goals while avoiding collisions.
We compare ARMS with the base environment reward (in other words, with no reward shaping applied) and potential-based reward shaping
across two MARL backbones, multiple agent counts, and reward-sparsity levels,
and evaluate the generalization of the learned policies on 50 unseen maps.

\textbf{The problem.} We consider the following version of Lifelong Path-finding: several agents operate in a graph $G=(V,E)$ in which vertices correspond to locations and the edges to transitions between locations. Each agent starts in a given location and it needs to traverse the graph to reach its given goal location. When an agent reaches its goal, it is then assigned a new goal location. In every timestep, all agents perform their actions simultaneously. Each agent can either stay in its current location or move to an adjacent one. Agents must cooperate as to not collide with or block one another. In our experiments the graph is a grid, and episodes are of fixed length $T_{\max}$. During training we set $T_{\max} \coloneq 256$.

\textbf{The environment} used in our experiments is the scalable and fast POGEMA \citep{POGEMA}. One map is used in all our training experiments, a $20 \times 20$ map with $30\%$ obstacle density is generated using POGEMA's random map generator, Figure \ref{fig:Experiments-MAP} displays the map used. We evaluate the trained policy's sampling efficiency and quality on 50 maps not encountered during training - 40 random maps and 10 maze-like maps.

\begin{figure}[t]
\centering

\begin{subfigure}{0.48\textwidth}
    \centering
    \includegraphics[width=\linewidth]{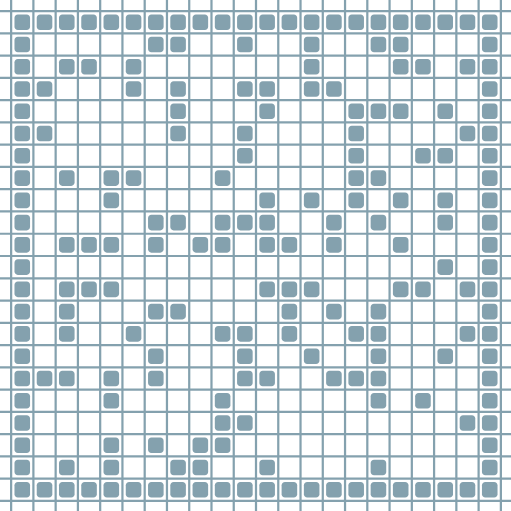}
    \caption{Empty map}
    \label{fig:Experiments-Map-no-agents}
\end{subfigure}
\hfill
\begin{subfigure}{0.48\textwidth}
    \centering
    \includegraphics[width=\linewidth]{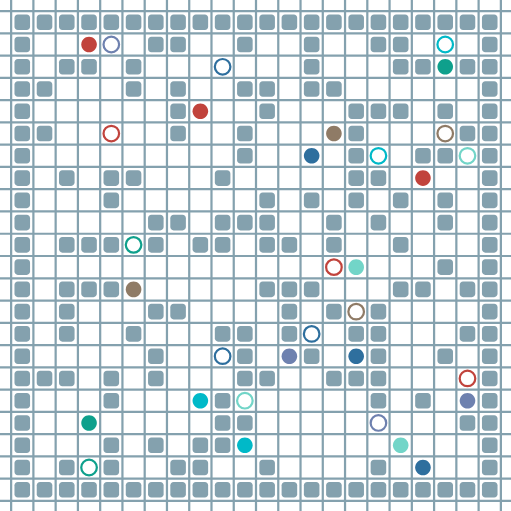}
    \caption{Map with 16 agents}
    \label{fig:Experiments-Map-16-agents}
\end{subfigure}

\caption{The map used in our experiments with $30\%$ obstacle density. Figure \ref{fig:Experiments-Map-no-agents} displays the empty map, while \ref{fig:Experiments-Map-16-agents} shows a typical episode initialization of 16 agents on the map; the filled circles denote agents while the hollow circles denote their corresponding targets.}
\label{fig:Experiments-MAP}
\end{figure}

\textbf{Agent observations} are based on a recent \citet{Follower}'s work. Each agent has a field of view (FOV) of its local environment. The view of radius $R \geq 0$ is the local $(2R+1)\times(2R+1)$ grid with the agent at the center. In our experiments we set $R \coloneq 5$, thus the FOV is of size $11 \times 11$. A variant of $A^*$ is used to generate a global path for each agent to its target location, of which is encoded within its FOV as way-points. More specifically the agent's observations are split into two matrices:
\begin{enumerate*}[label=(\arabic*)]
  \item \label{mat:obs+waypoints} One encoding shows the local obstacles within the FOV as $-1$ at that cell location, and the way-points along the computed path as $+1$,
  \item \label{mat:agents} The other, describing the local agents (including the self-agent placed at the center) as $+1$.
\end{enumerate*}

\textbf{The environment reward.} The reward is set to $r=+0.01$ each time the agent traverses into a way-point and $0$ otherwise, making for a dense reward function. However, our goal is to evaluate over a sparse-reward environment. For this reason, we change the reward to instead accumulate over several time-steps and give the agent the accumulated reward all at once; we use the terms \textit{original reward}, \textit{base reward}, and \textit{environment reward} interchangeably to refer to the reward provided by the environment, before adding any reward-shaping term. In the plots, this setting is denoted by \textit{No}, indicating that no reward shaping is applied. In all of our experiments we set the delay at $20$ time-steps, except in the second set of experiments in Section \ref{sec:experiments-training} in which we vary the delay period to show the robustness of our method across sparser rewards.

\textbf{The reward network structure.} The input to the network is the observations of a single agent. It is comprised of two parts:
\begin{enumerate*}[label=(\arabic*)]
  \item \label{rew-net:encoder} The Spatial encoder is a ResNet \citep{ResNet} with an additional Multi-Layer Perceptron (MLP) in the output layer of size $512$.
  \item \label{rew-net:MLP-head} Two additional MLP heads of size $128$ followed by a SiLU activation each.
  \item \label{rew-net:output-head} The output layer is an MLP head producing a $5$-dimensional vector, passed through a tanh activation, where each component represents the reward associated with selecting the corresponding action.
\end{enumerate*}
The training is end-to-end in all our experiments.

\textbf{Base RL algorithms.}
We use two multi-agent reinforcement learning backbones:
independent PPO (IPPO) and multi-agent PPO (MAPPO) \citep{PPO, IPPO, MAPPO}.
In both cases, the network follows the same architecture:
a spatial encoder first processes the local observation,
its output is passed to a GRU head modeling the observation history,
and the resulting representation is fed into actor and critic heads.
The actor outputs a distribution over the five available actions. During training, rollouts consisting of observations, actions, and rewards
are gathered asynchronously from multiple parallel environments with a fixed number of agents. 
The difference between IPPO and MAPPO is in the critic used for optimization:
IPPO trains agents using decentralized value estimates,
whereas MAPPO uses a centralized critic during training
while preserving decentralized execution.

\textbf{Reward-shaping variants.}
Throughout the experiments, we compare three reward settings.
The first, denoted "No" in the plots, trains the base MARL algorithm
directly on the sparse original environment reward.
The second, denoted "PBRS" (Potential Based Reward Shaping), augments the original reward with a hand-designed potential-based reward-shaping term, namely the commonly used manhattan distance in grid-world based environments.
The third, denoted \textbf{ARMS}, which is our reward-shaping signal.
In all cumulative-reward plots,
we report the accumulated original dense environment reward,
before applying reward sparsification.

\textbf{Training Plots.} Training commenced across $8$ parallel environments, each episode lasting $T_{\max} \coloneq 256$ timesteps. The cumulative reward plots in the following Sections show the average reward obtained across all agents throughout all $8$ parallel environments at that time-step, and they span several episodes.

\textbf{Evaluation.}
After training, the learned models are collected and evaluated
for episodes of length $T_{\max} \coloneq 256$.
We evaluate three aspects of the learned policies.
(1) First, we measure policy quality on the training map using average throughput,
defined as the number of times agents reach their goals divided by the episode length,
aggregated across all agents.
This metric captures the extent to which the learned policies solve
the lifelong path-finding task.
(2) Second, we evaluate whether the learned policies generalize beyond the training map
by measuring cumulative original dense environment reward
on the aforementioned 50 unseen maps.
(3) Finally, we measure agent coordination on the same 50 unseen maps
by reporting the normalized number of collisions accumulated during evaluation.
Sections~\ref{sec:avoiding-suboptimal-policy-convergence}
 and~\ref{sec:agent-coordination-generalization}
detail these results.

\textbf{Hyperparameters.} All the base-algorithm hyper-parameters relating to \citet{Follower} we keep unchanged, including the radius $R \coloneq 5$ mentioned earlier. The only exception is PPO's  exploration coefficient discussed in more detail in Section \ref{sec:agent-coordination-generalization}. The size and number of MLP heads in the reward network were optimized while holding all other parameters fixed, however the initial guess also outperformed the base algorithm. We additionally scale the reward from $[-1,+1]$ to $[-0.1,+0.1]$, which we observe to have greatly increased the performance of the network in terms of total accumulated original reward throughout training. \newline Little to no tuning was done to the hyper-parameters discussed next. First, we cap the trajectory buffer size at $16,384$ and set $K \coloneq 8,192$ which means $50\%$ of the buffer is renewed at every Reward Shaping Phase (see Algorithm \ref{alg:ARMS} step \ref{alg:ARMS:step-reward-shaping}). Each trajectory was of size $L \coloneq 16$ time-steps making several trajectories uninformative for the neural-network in the cases the reward delay was set to~$\in \{20, 30\}$. All experiments perform training for $\approx 4M$ agent-actions, thus when more agents are used, fewer timesteps (and therefore, fewer episodes) are present in the plots, for $16$ agents that is about $\approx 1K$ episodes.

\subsection{Sampling Efficiency} \label{sec:experiments-training}

In this section,
we evaluate whether ARMS improves sampling efficiency during training.
We compare ARMS against no reward shaping and potential-based reward shaping (PBRS),
using both IPPO and MAPPO as the underlying MARL backbone.
Unless stated otherwise,
the sparse environment reward is produced by accumulating the original dense reward
over $20$ timesteps and revealing the accumulated reward only at the end of the interval.
All cumulative-reward curves report the original dense environment reward,
before reward sparsification.

\begin{figure}[t]
\centering

\includegraphics[width=1.00\textwidth]{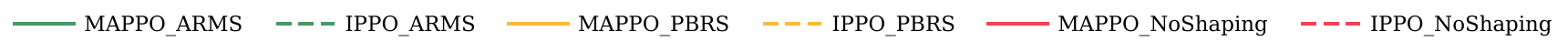}

\begin{subfigure}{0.48\textwidth}
    \centering
    \includegraphics[width=\linewidth]{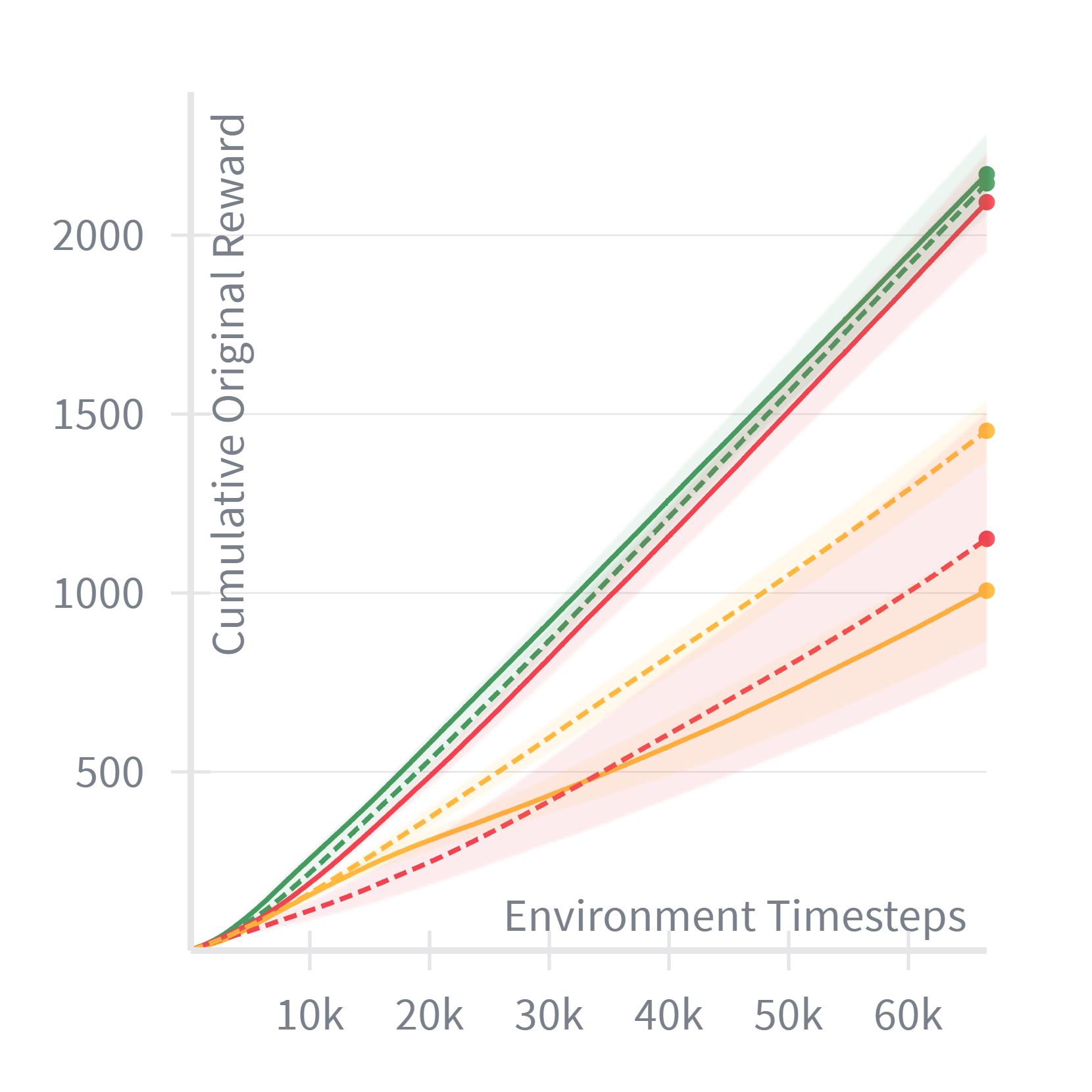}
    \caption{8 Agents}
\end{subfigure}
\hfill
\begin{subfigure}{0.48\textwidth}
    \centering
    \includegraphics[width=\linewidth]{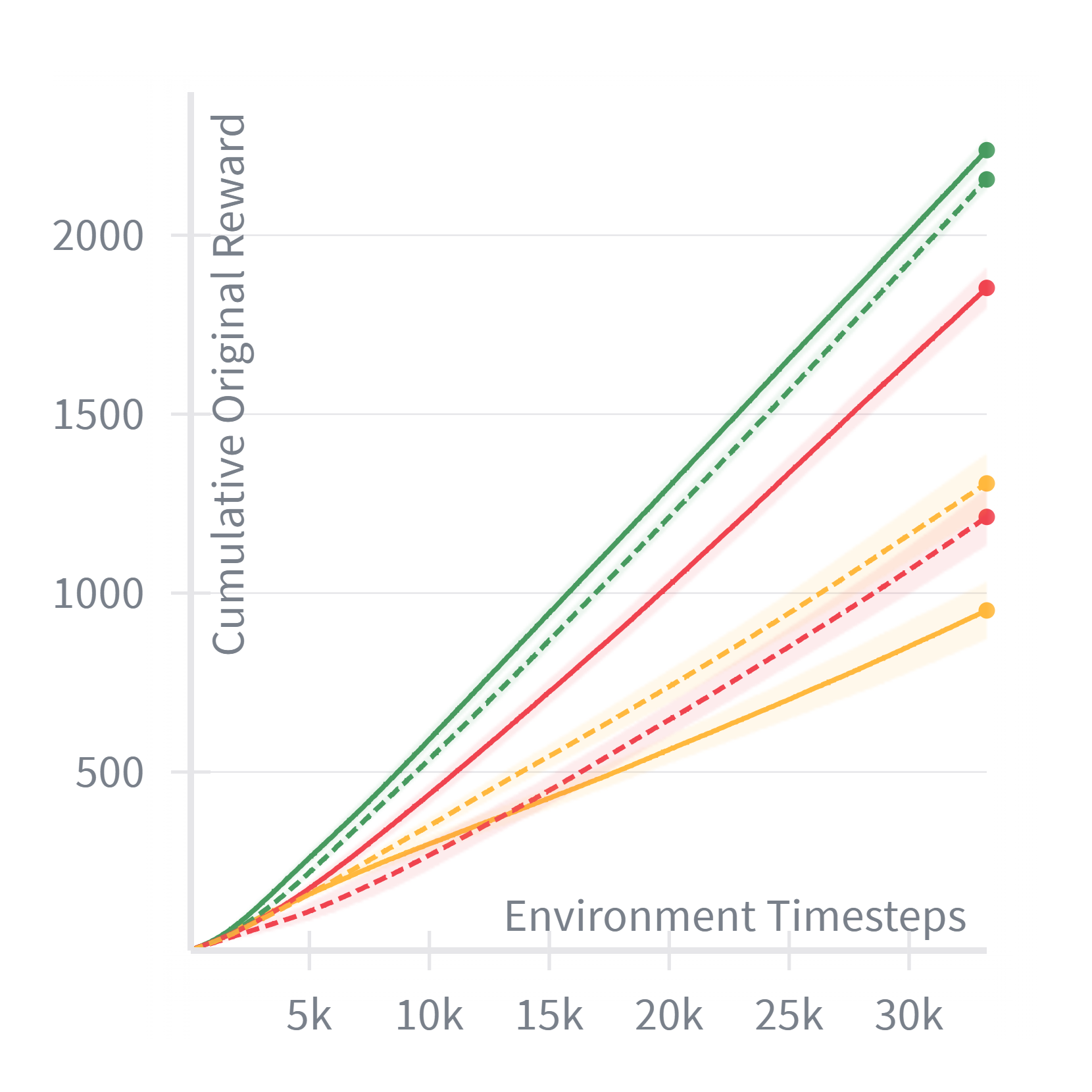}
    \caption{16 Agents}
\end{subfigure}

\vspace{0.6em}

\begin{subfigure}{0.48\textwidth}
    \centering
    \includegraphics[width=\linewidth]{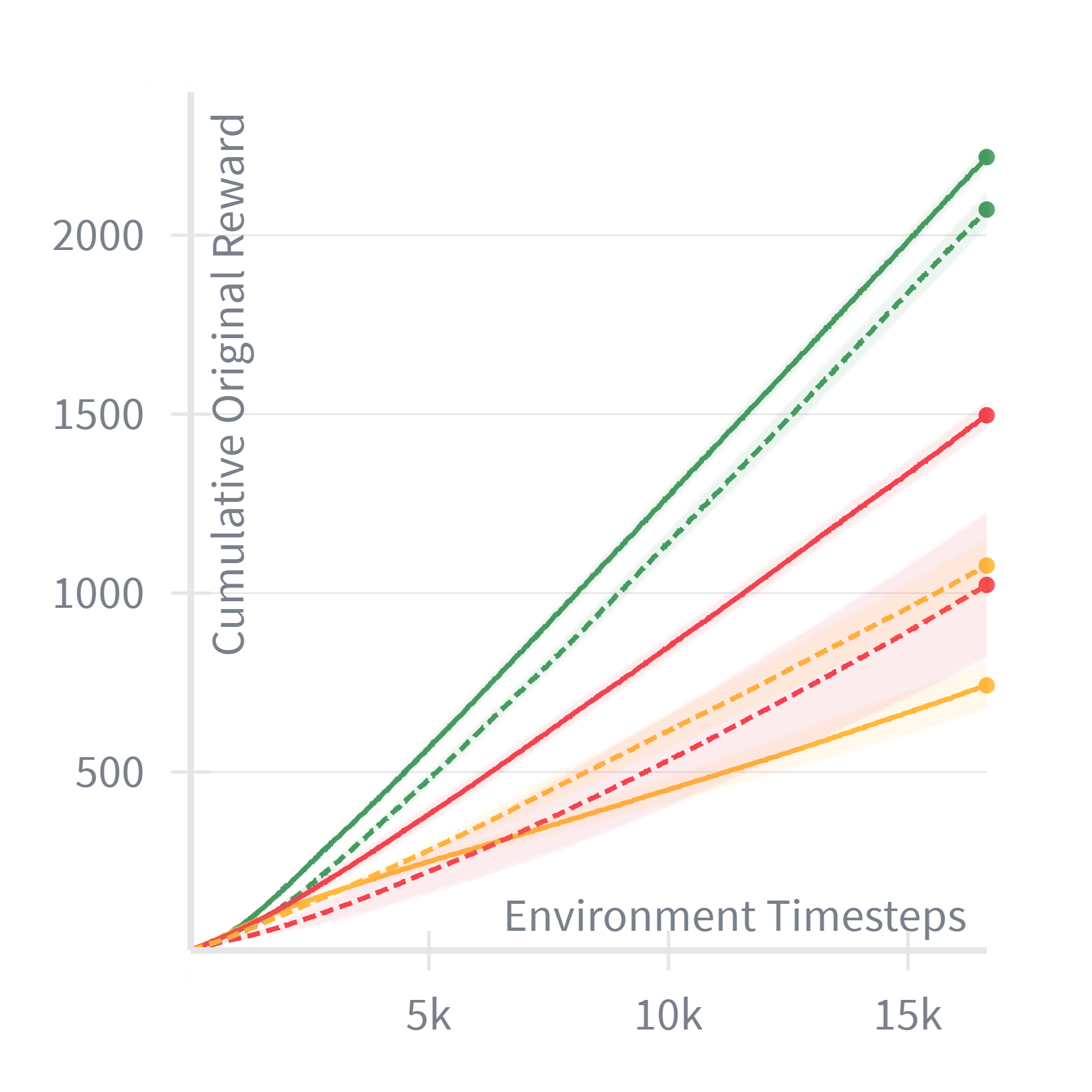}
    \caption{32 Agents}
\end{subfigure}
\hfill
\begin{subfigure}{0.48\textwidth}
    \centering
\end{subfigure}

\caption{Learning curves under a $20$-step sparse reward
for $8$, $16$, and $32$ agents.
We compare ARMS, PBRS, and no reward shaping,
each combined with IPPO and MAPPO.
The reported curves are the mean across $10$ seeds,
no curve smoothing is applied,
and the shaded region represents the standard deviation.}
\label{fig:vary-agents}
\end{figure}

In the first set of experiments,
we vary the number of agents while keeping the reward delay fixed at $20$ timesteps.
Figure~\ref{fig:Experiments-Map-16-agents}
shows an example episode initialization with $16$ agents,
and Figure~\ref{fig:vary-agents}
reports the corresponding learning curves for $8$, $16$, and $32$ agents.
Across these settings,
ARMS achieves higher cumulative original reward than PBRS and no shaping
in the more challenging multi-agent regimes.
For IPPO,
the improvement is strong across all agent counts.
For MAPPO,
the improvement is especially pronounced for $16$ and $32$ agents,
where coordination becomes increasingly important.
This provides early evidence that ARMS may be well-suited to induce inter-agent coordination in settings which have weak learning signals and harder multi-agent coordination demands,
a performance category further explored in
Section~\ref{sec:agent-coordination-generalization}.

\begin{figure}[t]
\centering

\includegraphics[width=1.00\textwidth]{legend.pdf}

\begin{subfigure}{0.48\textwidth}
    \centering
    \includegraphics[width=\linewidth]{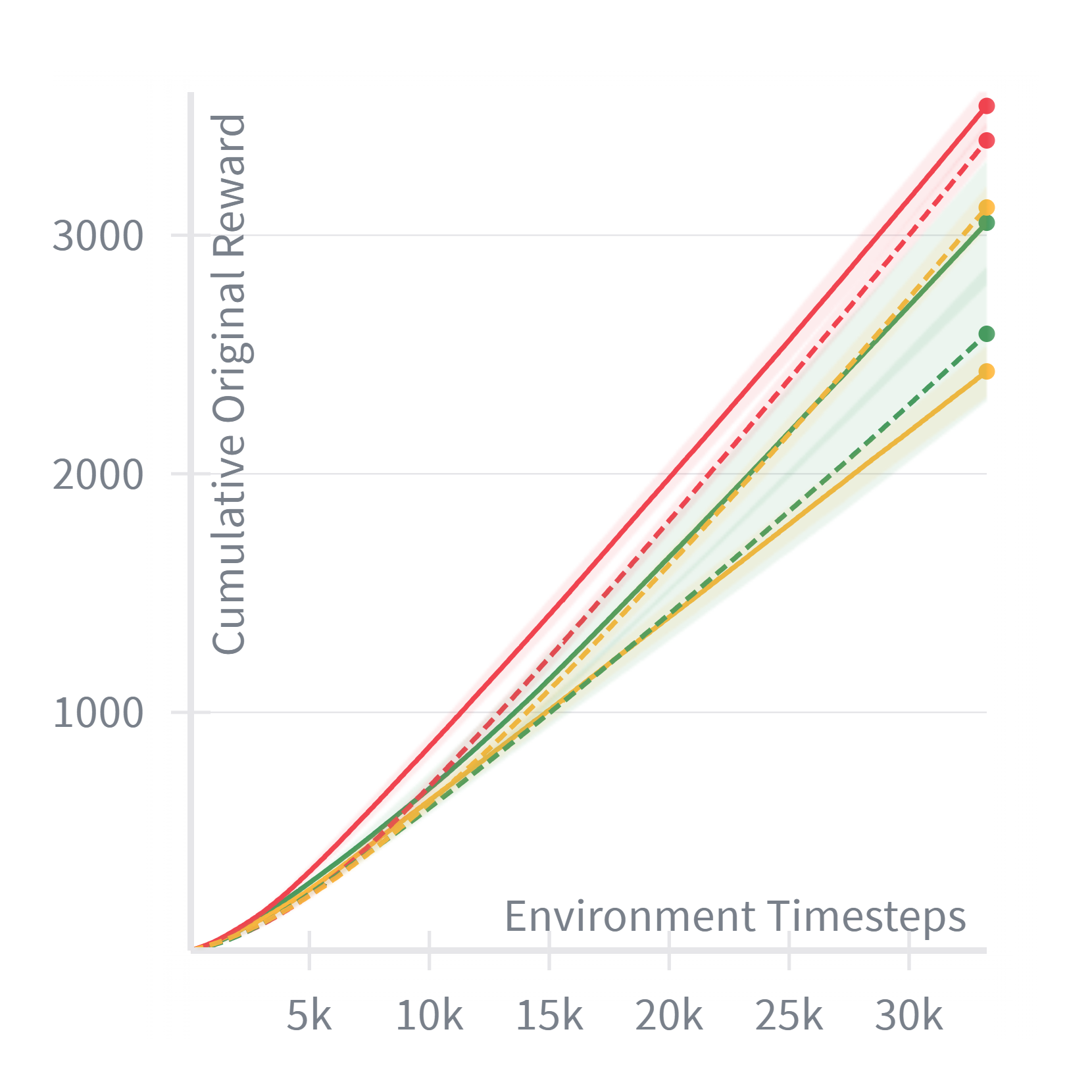}
    \caption{Dense reward (No delay)}
    \label{fig:16agents_rewards-a}
\end{subfigure}
\begin{subfigure}{0.48\textwidth}
    \centering
    \includegraphics[width=\linewidth]{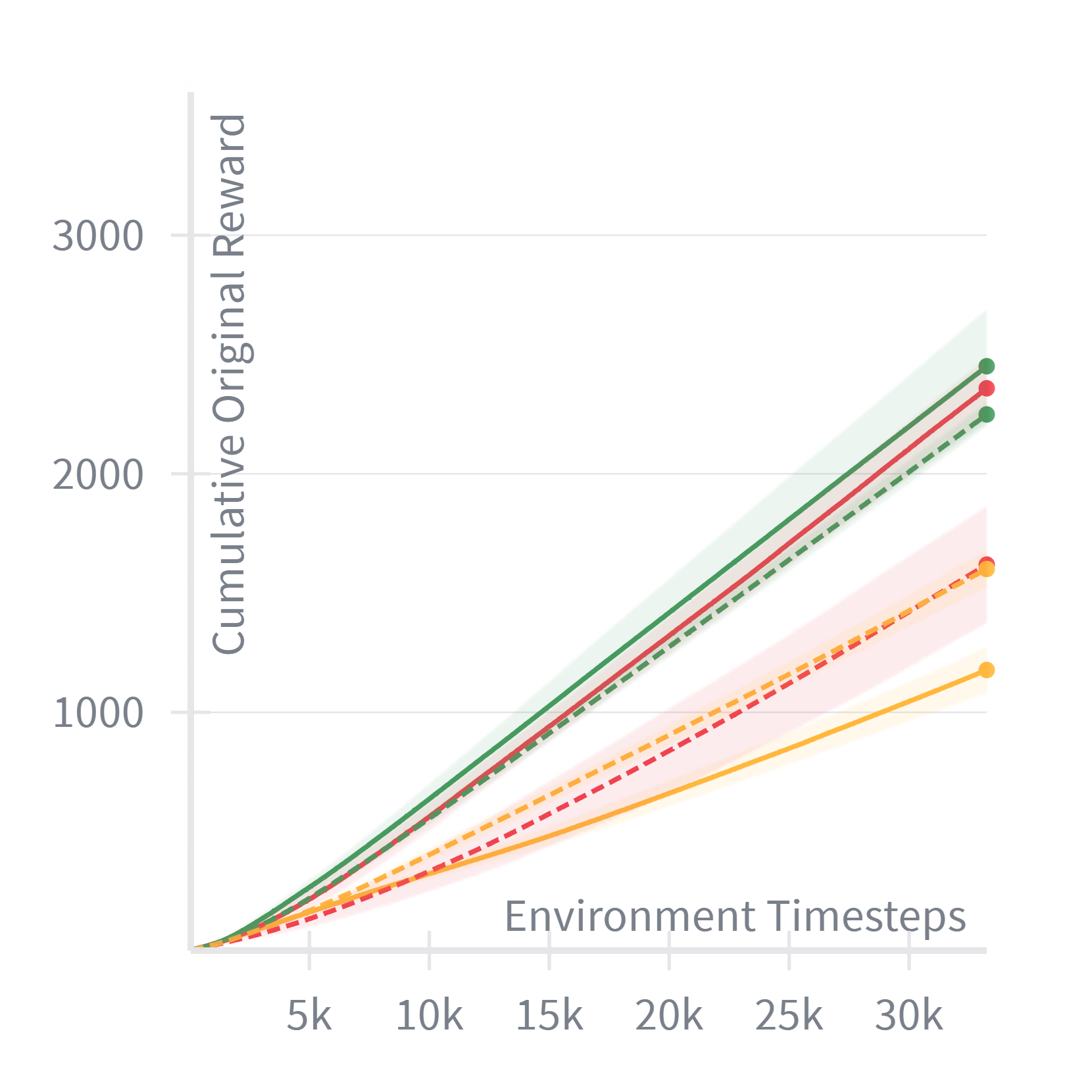}
    \caption{10-step sparse reward}
    \label{fig:16agents_rewards-b}
\end{subfigure}

\vspace{0.6em}

\begin{subfigure}{0.48\textwidth}
    \centering
    \includegraphics[width=\linewidth]{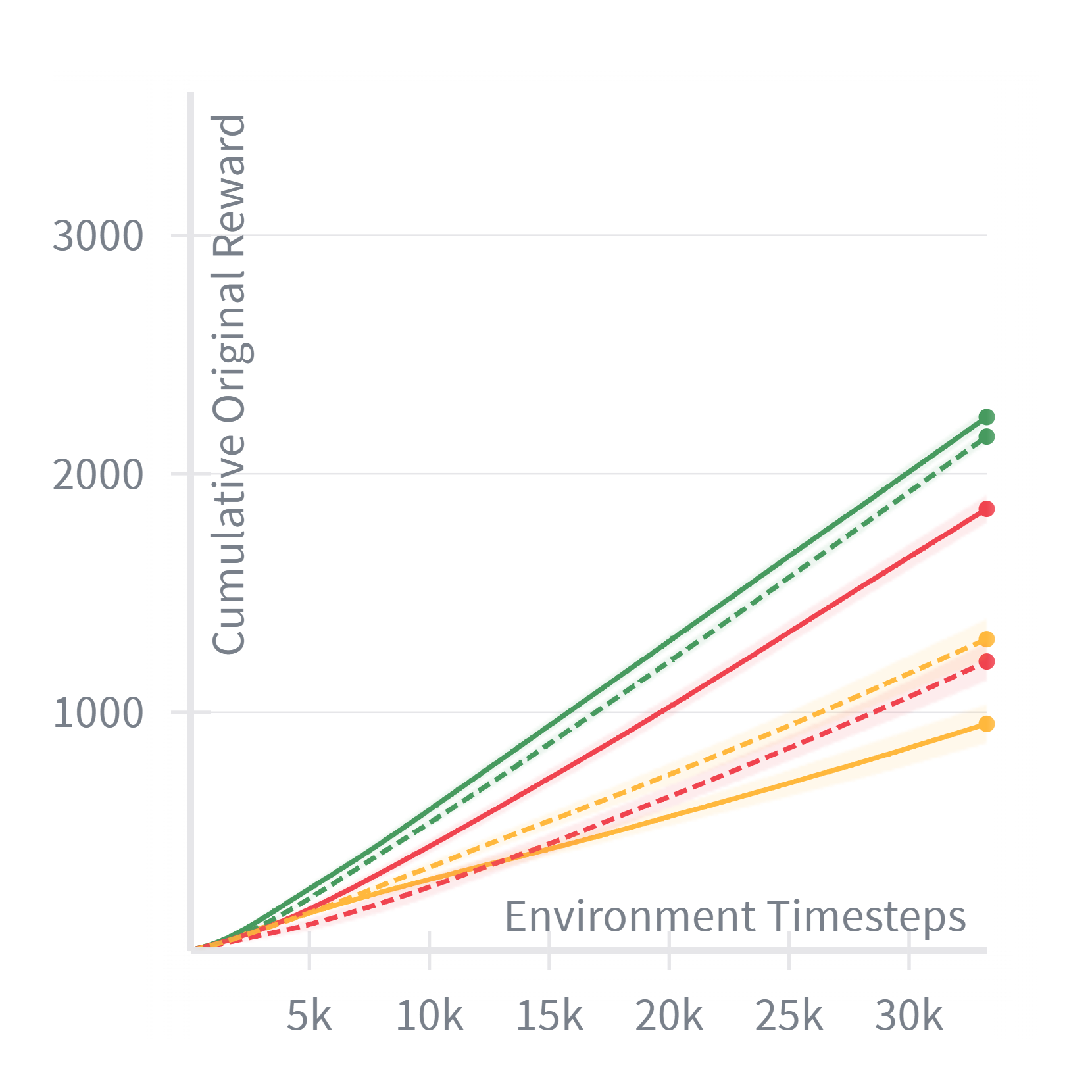}
    \caption{20-step sparse reward}
    \label{fig:16agents_rewards-c}
\end{subfigure}
\begin{subfigure}{0.48\textwidth}
    \centering
    \includegraphics[width=\linewidth]{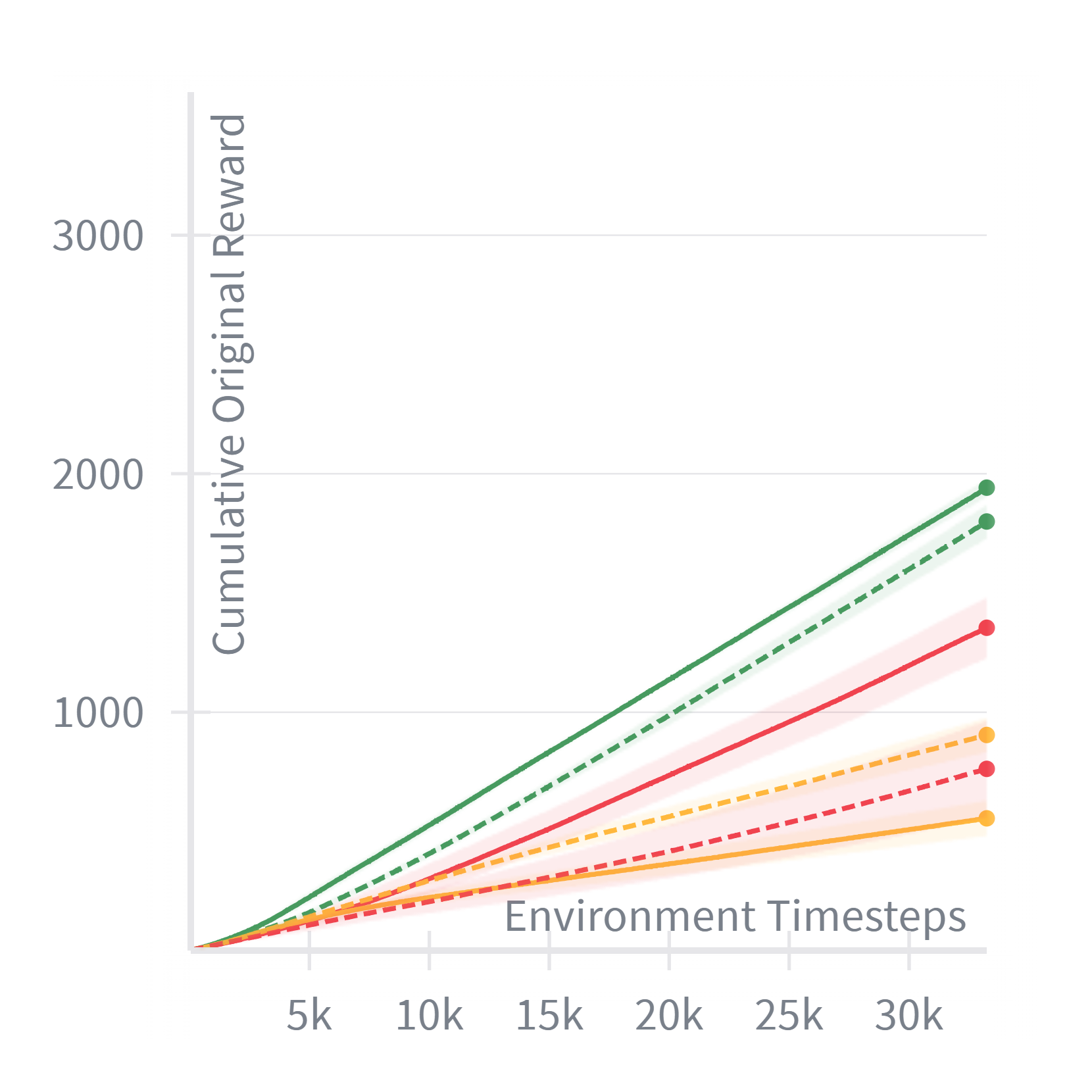}
    \caption{30-step sparse reward}
    \label{fig:16agents_rewards-d}
\end{subfigure}

\caption{Learning curves with $16$ agents under different reward-sparsity levels.
The dense setting corresponds to the original environment reward without delay.
The sparse settings accumulate the original reward over
$K \in \{10,20,30\}$ timesteps
and reveal it only at the end of the interval.
We compare ARMS, PBRS, and no reward shaping,
each combined with IPPO and MAPPO.
The reported curves are the mean across $10$ seeds,
no curve smoothing is applied,
and the shaded region represents the standard deviation.
}
\label{fig:16agents_rewards}
\end{figure}

Next,
we fix the number of agents at $16$
and vary the reward-sparsity level.
Figure~\ref{fig:16agents_rewards}
compares the dense original reward setting
against sparse reward settings with accumulation intervals
$K \in \{1\text{ (dense)}, 10,20,30\}$.
When the original reward is dense,
all methods receive frequent task feedback,
and ARMS does not provide an advantage, as expected.
As the reward becomes sparser,
ARMS provides a clearer benefit,
achieving substantially higher cumulative original reward
than PBRS and no shaping in the delayed-reward settings, while MAPPO is able to stay relatively competitive with ARMS in the $10$ sparse delay setting but performance begins to degrade as rewards become sparser. 
This supports the main motivation of ARMS:
learning a dense shaping signal is most useful
when the environment reward itself becomes less frequent and less informative.

This result is notable because the reward-network trajectories
have fixed length $L \coloneq 16$.
Thus,
when the reward delay is set to $20$ or $30$ timesteps,
some sampled trajectories may contain no revealed environment reward.
Nevertheless,
ARMS is still able to learn useful shaping signals
and improve training under these sparse-feedback regimes.

\begin{figure}[t]
\centering

\includegraphics[width=1.00\textwidth]{legend.pdf}

\begin{subfigure}{0.48\textwidth}
    \centering
    \includegraphics[width=\linewidth]{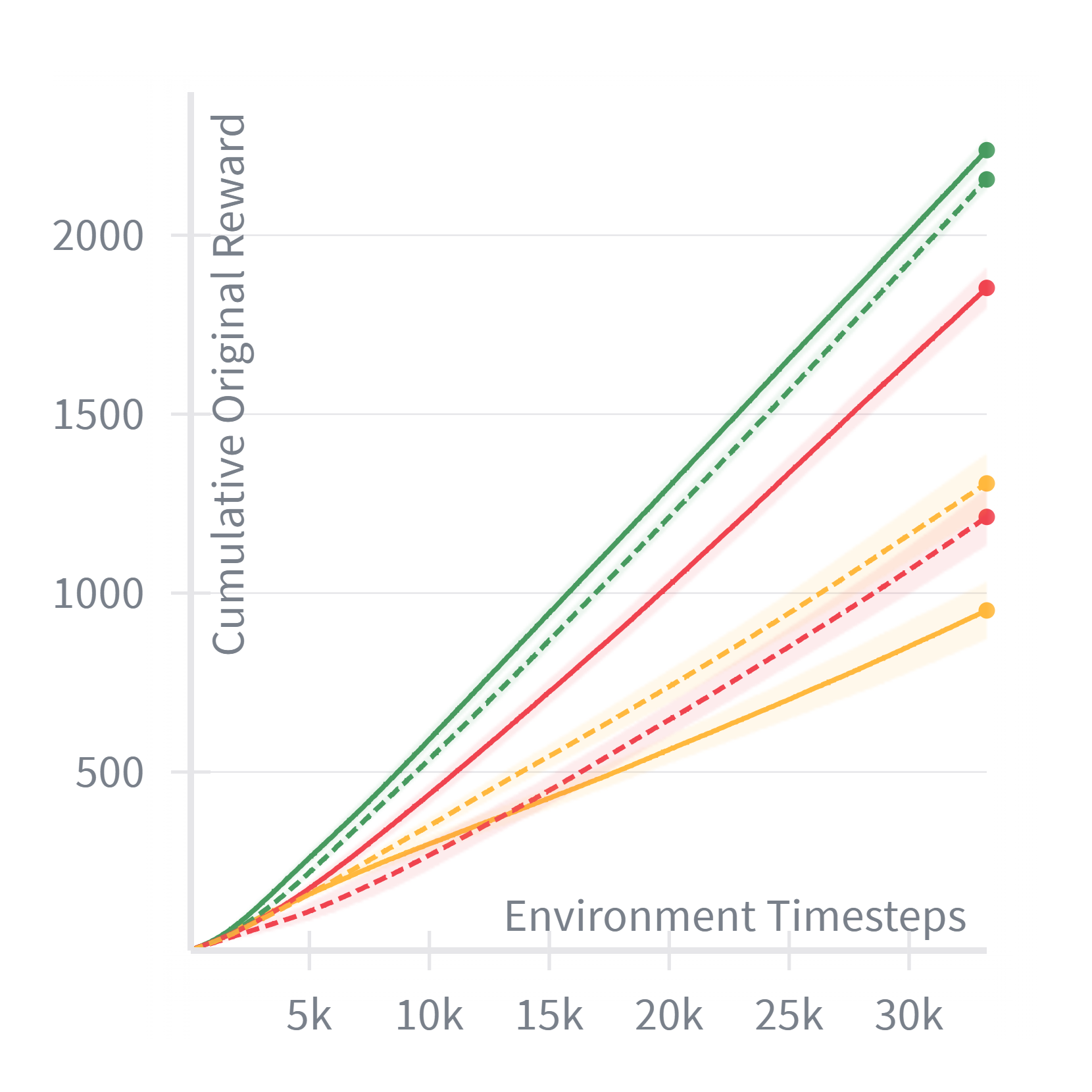}
    \caption{$\alpha = 0.023$}
    \label{fig:explore-0023}
\end{subfigure}
\hfill
\begin{subfigure}{0.48\textwidth}
    \centering
    \includegraphics[width=\linewidth]{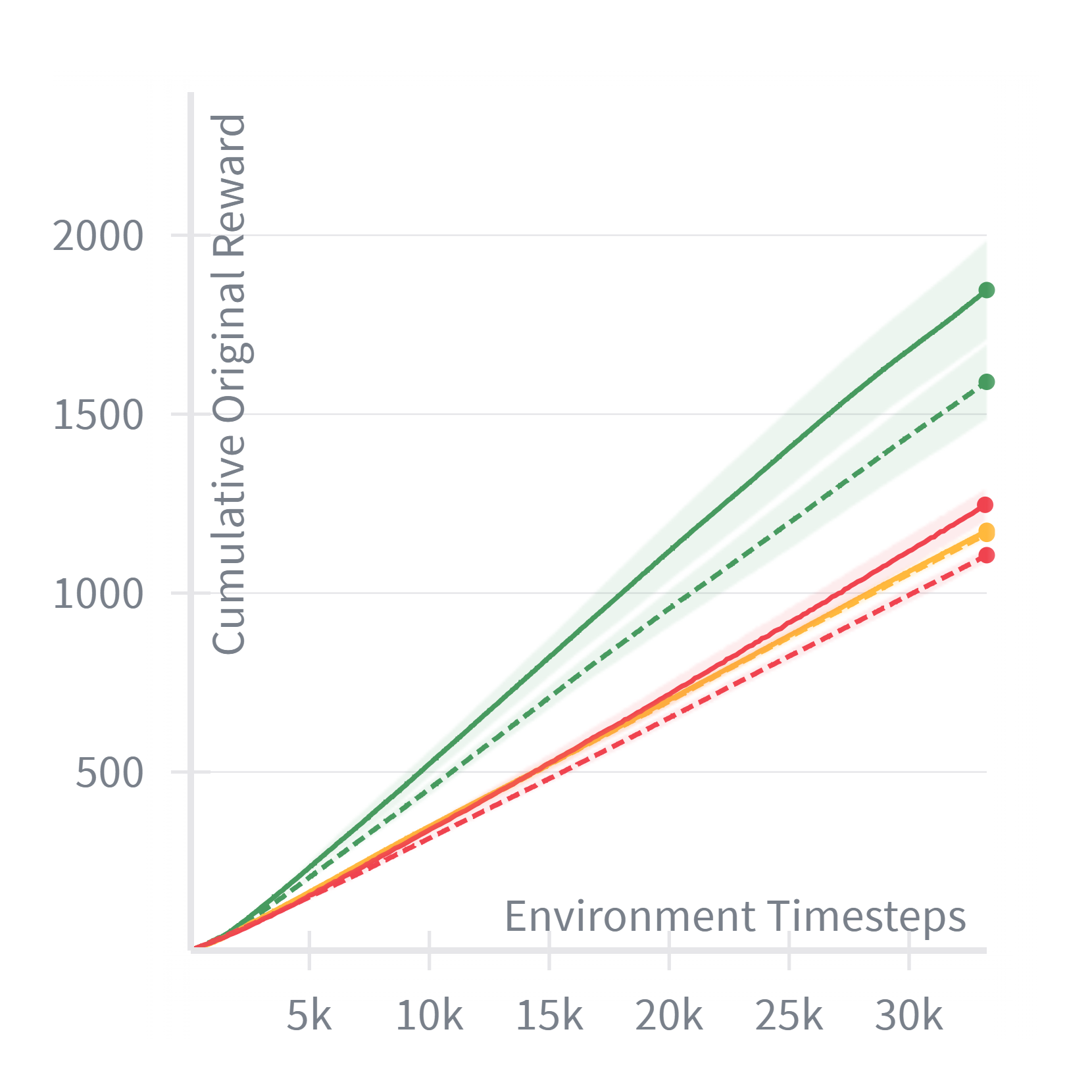}
    \caption{$\alpha = 0.15$}
    \label{fig:explore-015}
\end{subfigure}

\vspace{0.6em}

\begin{subfigure}{0.48\textwidth}
    \centering
    \includegraphics[width=\linewidth]{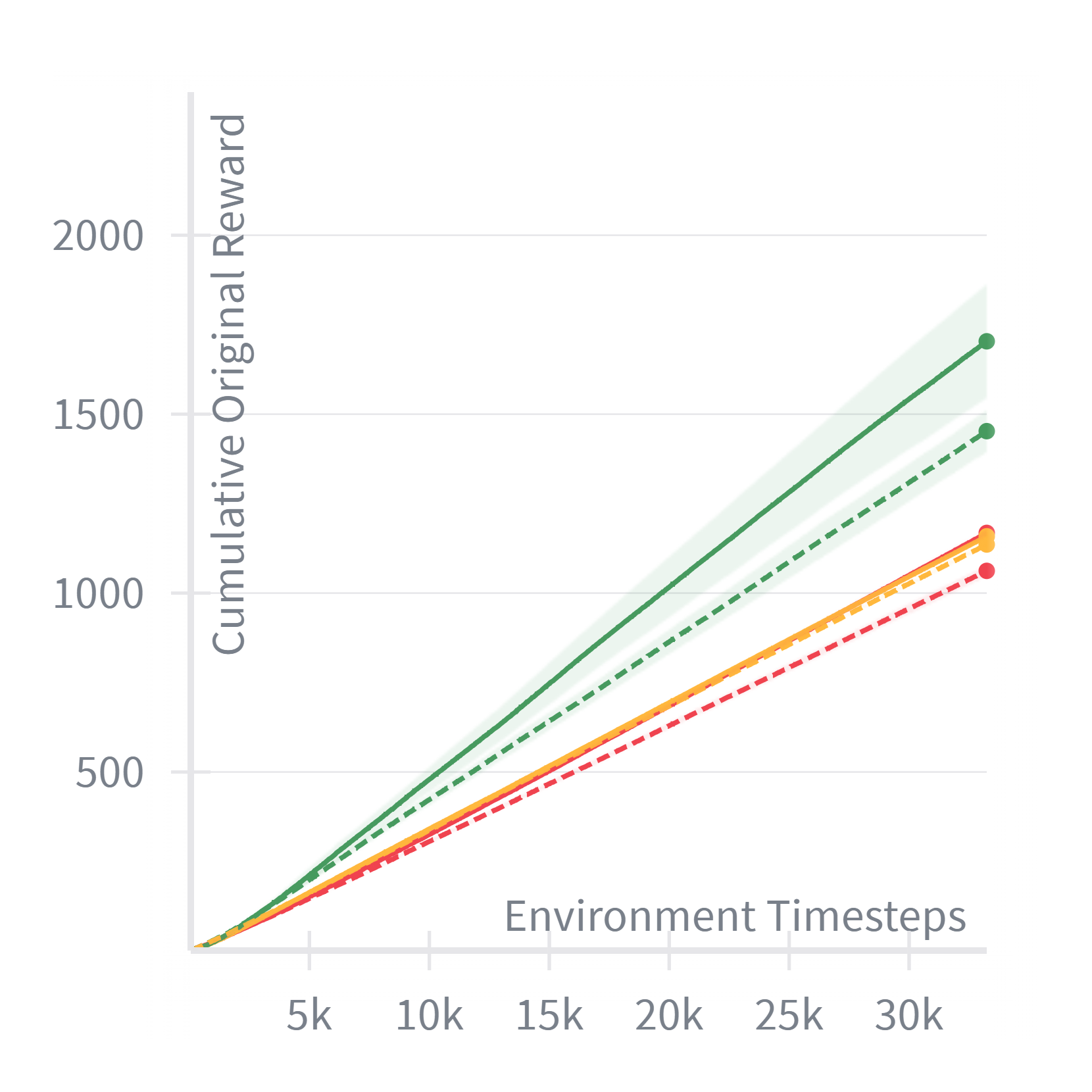}
    \caption{$\alpha = 0.25$}
    \label{fig:explore-025}
\end{subfigure}
\hfill
\begin{subfigure}{0.48\textwidth}
    \centering
    \includegraphics[width=\linewidth]{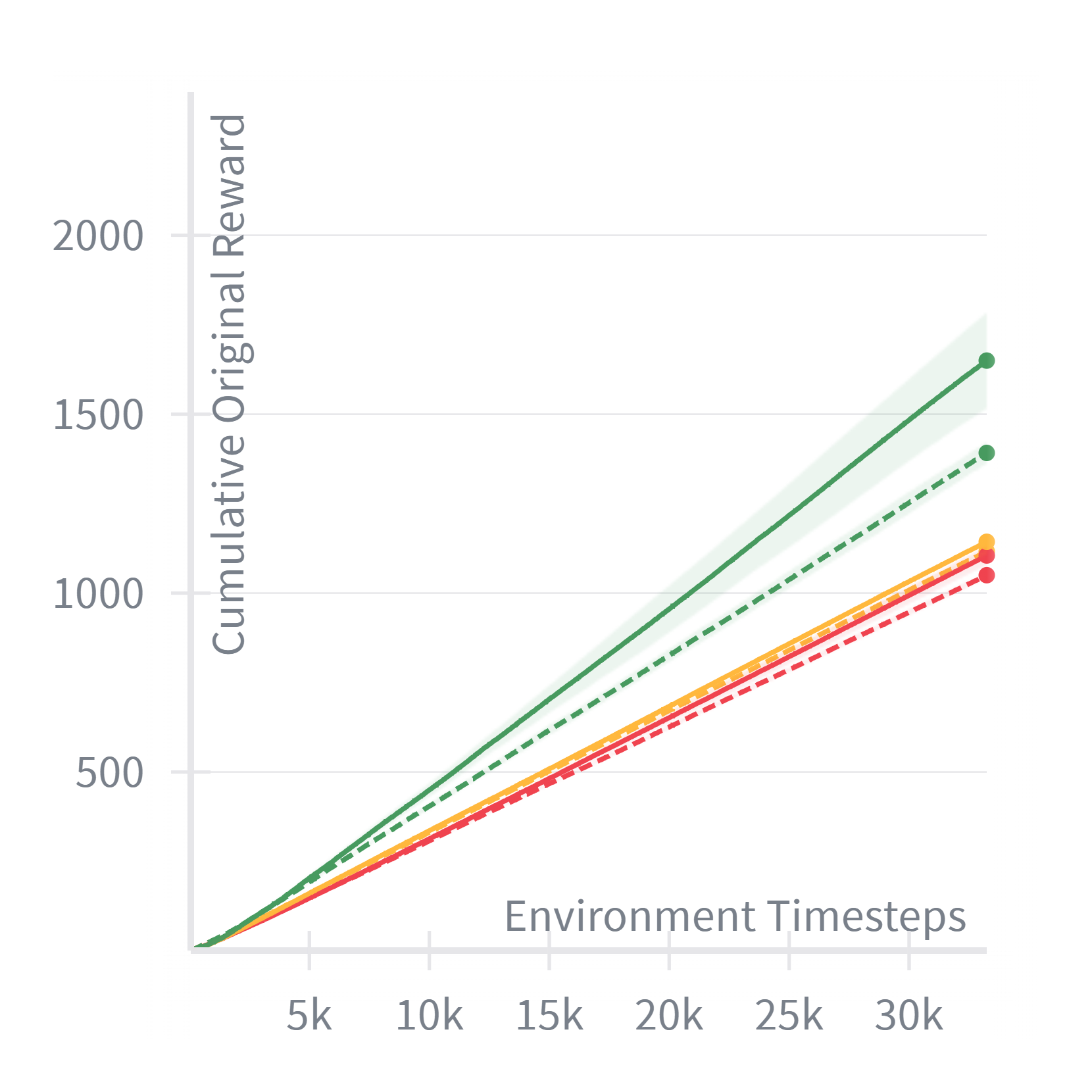}
    \caption{$\alpha = 0.35$}
    \label{fig:explore-035}
\end{subfigure}

\caption{
Effect of the exploration coefficient $\alpha$ on training performance.
Across all tested values of $\alpha$,
ARMS maintains higher cumulative original reward than PBRS and no shaping.
}
\label{fig:exploration-coeff-comparison}
\end{figure}

\subsection{Avoiding Suboptimal Policy Convergence} \label{sec:avoiding-suboptimal-policy-convergence}

In this set of experiments, we evaluate how to prevent ARMS into converging into suboptimal behaviour.

We show that varying the exploration coefficient has important consequences to the performance of the policy, which is especially difficult to learn in the MARL setting: agents continually co-adapt to each others behaviour and thus insufficient exploration can quickly lead to highly repetitive trajectories and stable but suboptimal coordination patterns. Particularily, in our chosen environment and underlying dense reward, agents can "hack" the reward by simply oscilitating repeatedly taking a positive small reward half the time and become unable to recover without enough exploration. This effect is especially pronounced in early training, agents will frequently collide or block one another during the RL phase (Algorithm \ref{alg:ARMS}, step \ref{alg:ARMS:RL-Phase}), producing trajectories that explore only a narrow region of the joint observation space. As a result the reward is repeatedly trained on similar interaction patterns during the reward-shaping phase (Algorithm \ref{alg:ARMS}, step \ref{alg:ARMS:step-reward-shaping}) and receives little signal about how alternative behaviors should be valued. Further, because agents share the policy parameters in our realization/implementation of ARMS, when agents block one another the policy is likely even more prone to this suboptimal behavior. As agents subsequently adapt their policies using this learned shaping reward (during the RL Phase), the system can enter a feedback loop where both the policy and the reward network specialize on this limited set of trajectories, making it difficult for the reward model to infer meaningful rewards for behaviors that have not yet ben observed. Increasing the exploration coefficient therefore encourages a more diverse trajectory distribution allowing the reward network to observe a wider range of interactions between agents and helping break this feedback loop. By measuring throughput we can measure whether the network has learned a meaningful policy with respect to solving the task.

In all previous experiments, the exploration coefficient was kept fixed at $0.023$. 
As shown in Figure~\ref{fig:throughput-arms-exploration}, agents trained with the ARMS framework under this setting achieve nearly zero throughput across MAPPO and IPPO, indicating convergence to suboptimal policies.

First, we verify that ARMS continues to dominate in terms of reward accumulation, so we repeat the first set of experiments from Section~\ref{sec:experiments-training} for $16$ agents while varying the exploration coefficient. 
As Figure \ref{fig:exploration-coeff-comparison} reports, ARMS consistently outperforms the baselines in reward accumulation across all tested exploration values, showing robustness to this parameter, in agreement with the earlier results reported in Figure~\ref{fig:vary-agents}. As $\alpha$ increases, the obtained reward in the final timestep may be lower but as we can see in Figure~\ref{fig:throughput-arms-exploration}, it mitigates this early convergence phenomenon.\footnote{The evaluated policies were the median performing policies in terms of cumulative reward at the end of training.}

\begin{figure}[h]
\centering
\includegraphics[width=0.62\textwidth]{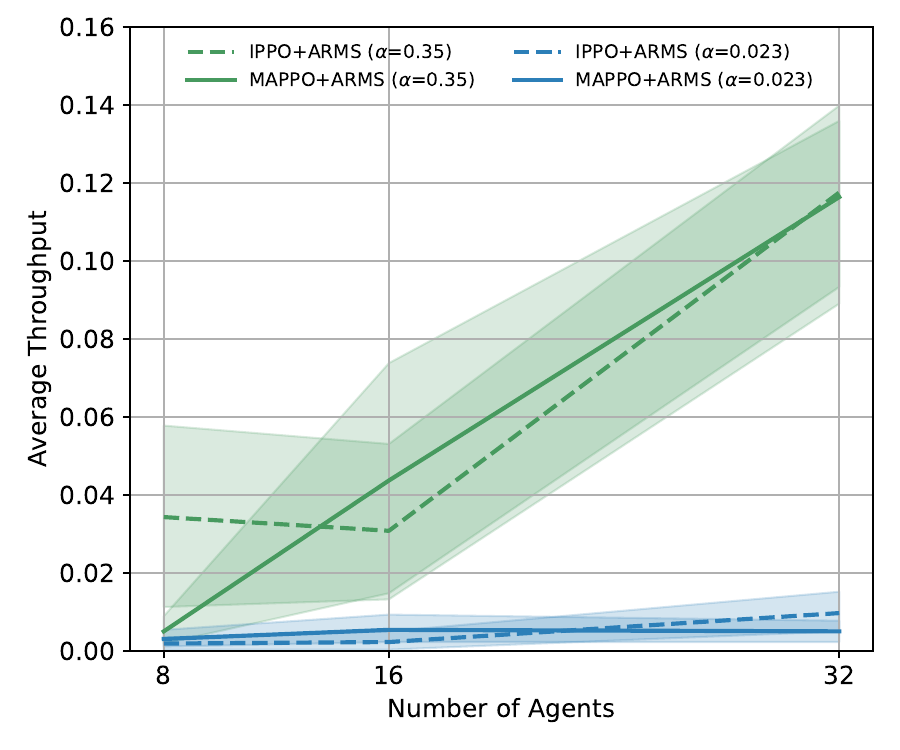}
\caption{
Average throughput of the learned ARMS policies
under low and high exploration.
We compare IPPO+ARMS and MAPPO+ARMS
with $\alpha=0.023$ and $\alpha=0.35$
across different agent counts.
The policies were evaluated across $100$ seeds,
and the shaded area represents the $95\%$ confidence interval.
}
\label{fig:throughput-arms-exploration}
\end{figure}

\begin{figure}[h]
\centering

\begin{subfigure}{0.48\textwidth}
    \centering
    \includegraphics[width=\linewidth]{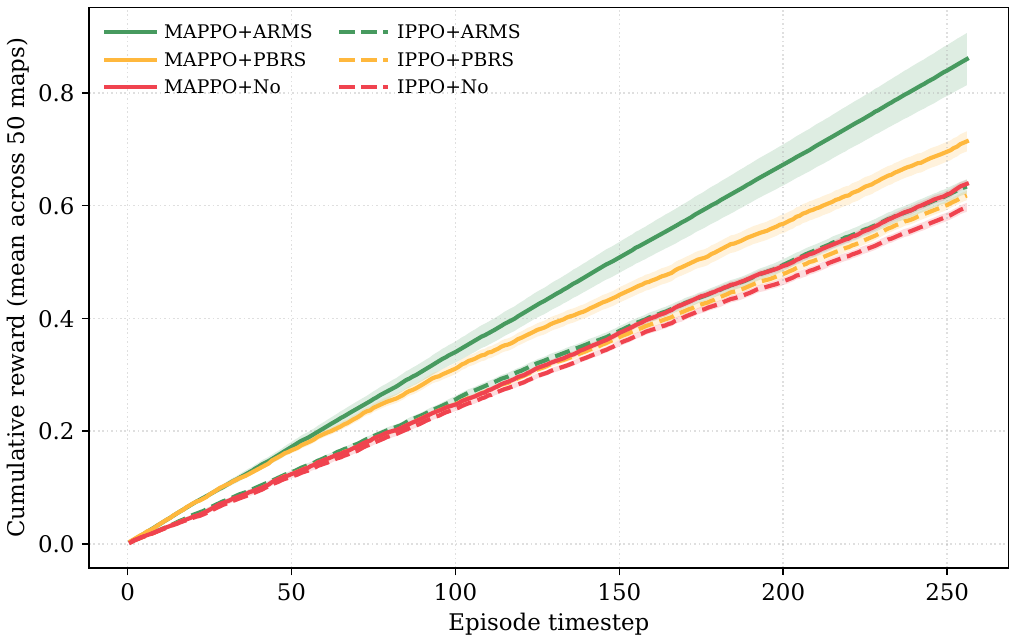}
    \caption{8 agents}
    \label{fig:eval-50maps-8agents}
\end{subfigure}
\hfill
\begin{subfigure}{0.48\textwidth}
    \centering
    \includegraphics[width=\linewidth]{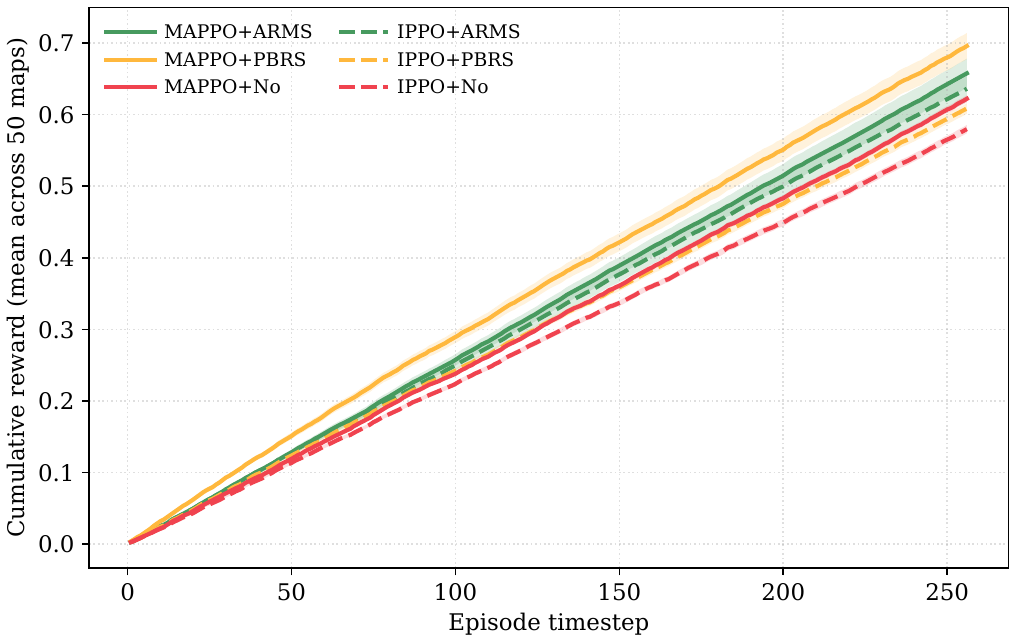}
    \caption{16 agents}
    \label{fig:eval-50maps-16agents}
\end{subfigure}

\vspace{0.6em}

\begin{subfigure}{0.48\textwidth}
    \centering
    \includegraphics[width=\linewidth]{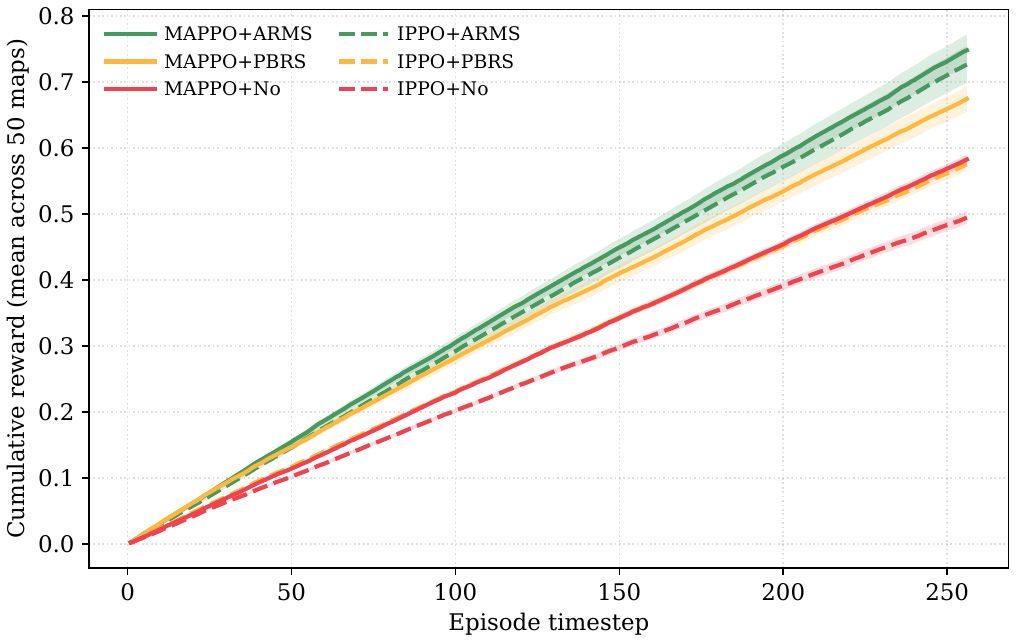}
    \caption{32 agents}
    \label{fig:eval-50maps-32agents}
\end{subfigure}
\hfill
\begin{subfigure}{0.48\textwidth}
    \centering
\end{subfigure}

\caption{
Cumulative original dense environment reward
on $50$ unseen evaluation maps
for $8$, $16$, and $32$ agents.
The evaluated policies were trained with exploration coefficient $\alpha=0.35$.
We compare ARMS, PBRS, and no reward shaping
with both IPPO and MAPPO.
}
\label{fig:eval-50maps-reward}
\end{figure}

\begin{figure}[h]
\centering

\begin{subfigure}{0.48\textwidth}
    \centering
    \includegraphics[width=\linewidth]{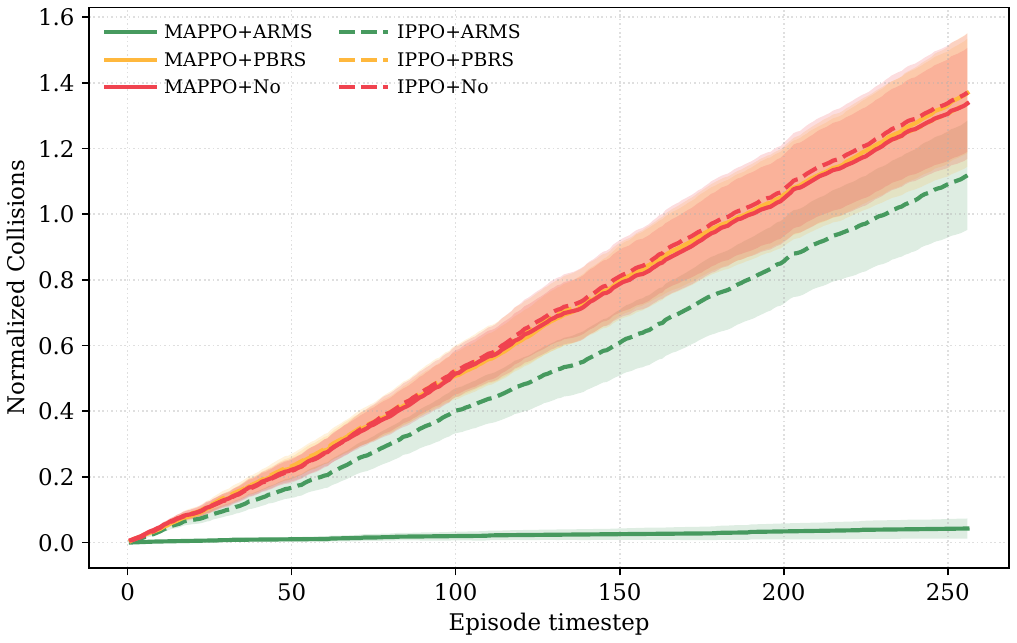}
    \caption{8 agents}
    \label{fig:collisions-50maps-8agents}
\end{subfigure}
\hfill
\begin{subfigure}{0.48\textwidth}
    \centering
    \includegraphics[width=\linewidth]{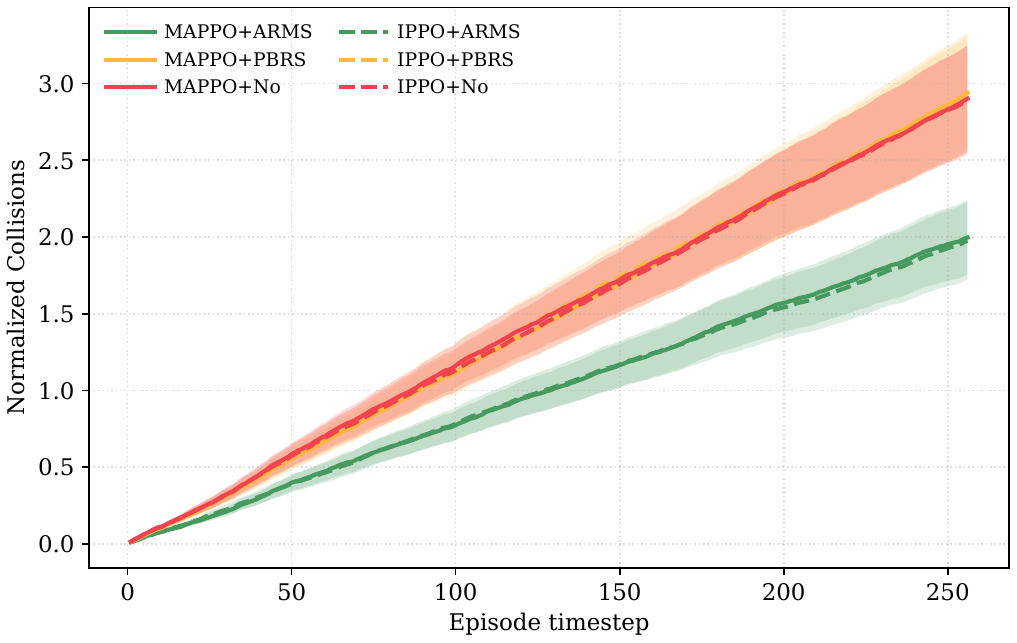}
    \caption{16 agents}
    \label{fig:collisions-50maps-16agents}
\end{subfigure}

\vspace{0.6em}

\begin{subfigure}{0.48\textwidth}
    \centering
    \includegraphics[width=\linewidth]{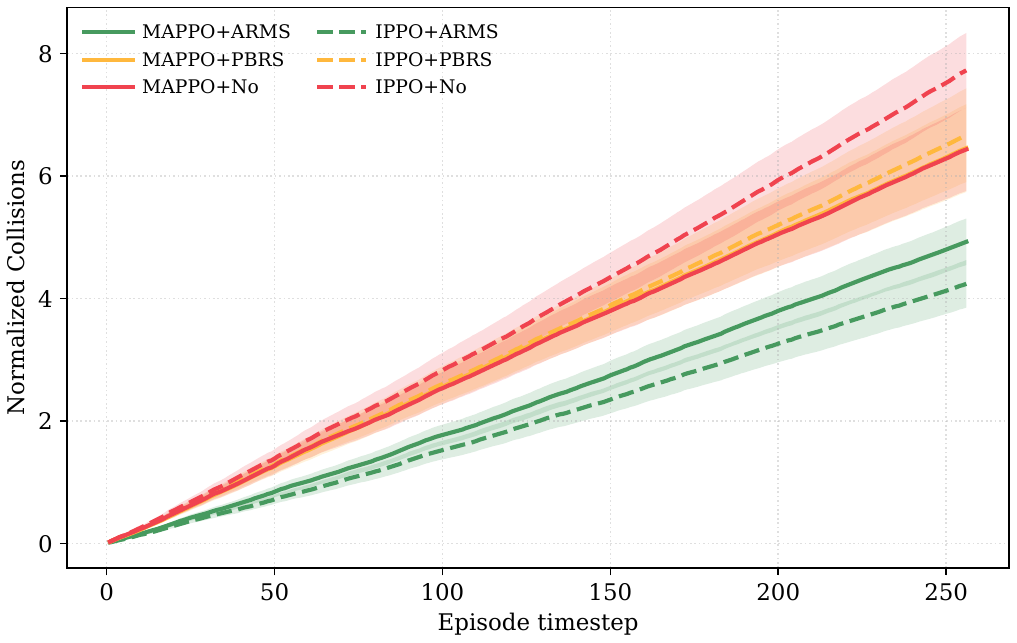}
    \caption{32 agents}
    \label{fig:collisions-50maps-32agents}
\end{subfigure}
\hfill
\begin{subfigure}{0.48\textwidth}
    \centering
\end{subfigure}

\caption{
Normalized collisions accumulated on $50$ unseen evaluation maps
for $8$, $16$, and $32$ agents.
The evaluated policies were trained with exploration coefficient $\alpha=0.35$.
We compare ARMS, PBRS, and no reward shaping
with both IPPO and MAPPO.
}
\label{fig:eval-50maps-collisions}
\end{figure}

\subsection{Agent coordination and generalization} \label{sec:agent-coordination-generalization}

In this set of experiments, we evaluate the generated policies, which were generated with the exploration coefficient set to $0.35$, on $50$ unseen maps during training - 40 random maps and 10 maze-like maps. First, we measure the accumulated original reward as we previously have across the episode, but also introduce another metric for the evaluation, \textit{normalized collisions}, defined as the number of total collisions to occur divided by the episode length. This measure helps us further determine whether the reward teaches a coordinated policy. Recall that the original dense reward gives a positive reward for every step taken along the path, thus when agents are not colliding as much, they should be accumulating reward much more quickly, which is precisely what ARMS tries to teach the agents as we will see.

Figure~\ref{fig:eval-50maps-reward} reports the cumulative original dense environment reward averaged over the $50$ unseen maps, each episode ran for $T_{max}\coloneq 256$ timesteps. Across the evaluated agent counts, ARMS generally accumulates reward more quickly than PBRS and no shaping, indicating that the learned shaping signal improves policy quality beyond the training map.

Figure~\ref{fig:eval-50maps-collisions}
reports the normalized collisions accumulated over the same $50$ unseen maps.
Across all evaluated agent counts,
ARMS produces substantially fewer collisions than PBRS and no shaping,
suggesting that the learned shaping signal improves coordination
rather than merely increasing reward accumulation.

We summarize the total accumulated dense reward throughout training for IPPO in Table~\ref{table:ippo_cumulative_reward}, the throughput in Table~\ref{table:arms_throughput_training}, and the normalized collisions for IPPO in Table~\ref{table:ippo_normalized_collisions}. Each entry reports mean~$\pm$~standard deviation reported in the earlier Figures.

\begin{table}[h]
\centering
\caption{Cumulative original dense (unshaped) reward on the training map for IPPO, broken down by (i) agent count at fixed reward sparsity~$=20$ and (ii) reward sparsity at fixed agent count~$=16$. Each cell reports mean$\pm$std observed in the final timestep. Columns compare the shaping methods No (unshaped), PBRS, and ARMS.}
\label{table:ippo_cumulative_reward}
\begin{tabular}{llccc}
\toprule
& & \multicolumn{3}{c}{Cumulative Original Reward} \\
\cmidrule(lr){3-5}
 & & No & PBRS & ARMS \\
\midrule
\multirow{3}{*}{Agent Count}
  &  8 & $1114.58 \pm 344.08$ & $1413.82 \pm 82.21$ & $2116.54 \pm 108.49$ \\
  & 16 & $1212.87 \pm  79.48$ & $1306.50 \pm 83.41$ & $2155.95 \pm  28.06$ \\
  & 32 & $1022.83 \pm 201.99$ & $1076.61 \pm 74.92$ & $2071.69 \pm  47.75$ \\
\midrule
\multirow{3}{*}{Reward Sparsity}
  & 10 & $1619.27 \pm 244.68$ & $1600.68 \pm 78.23$ & $2248.71 \pm 47.97$ \\
  & 20 & $1212.87 \pm  79.48$ & $1306.50 \pm 83.41$ & $2155.95 \pm 28.06$ \\
  & 30 & $ 762.55 \pm 200.84$ & $ 904.65 \pm 71.63$ & $1799.66 \pm 70.29$ \\
\bottomrule
\end{tabular}
\end{table}

\begin{table}[h]
\centering
\caption{Average throughput on the training map for ARMS reward shaping, broken down by algorithm (IPPO, MAPPO), agent count (8, 16, 32), and exploration coefficient ($\alpha\in\{0.023, 0.35\}$). Each cell reports mean$\pm$std over 10 evaluation seeds.}
\label{table:arms_throughput_training}
\begin{tabular}{llcc}
\toprule
Algorithm & Agent Count & $\alpha=0.023$ & $\alpha=0.35$ \\
\midrule
\multirow{3}{*}{IPPO}  &  8 & $0.002 \pm 0.003$ & $0.034 \pm 0.041$ \\
                       & 16 & $0.002 \pm 0.004$ & $0.031 \pm 0.036$ \\
                       & 32 & $0.010 \pm 0.009$ & $0.118 \pm 0.043$ \\
\midrule
\multirow{3}{*}{MAPPO} &  8 & $0.003 \pm 0.004$ & $0.005 \pm 0.006$ \\
                       & 16 & $0.006 \pm 0.006$ & $0.044 \pm 0.053$ \\
                       & 32 & $0.005 \pm 0.005$ & $0.116 \pm 0.037$ \\
\bottomrule
\end{tabular}
\end{table}

\begin{table}[h]
\centering
\caption{Normalized collisions at the final timestep $T_{max} \coloneq 256$ on the 50-map out-of-distribution benchmark, for IPPO with $\alpha=0.35$ and reward sparsity $20$. Each cell reports mean$\pm$std across the 50 evaluation maps.}
\label{table:ippo_normalized_collisions}
\begin{tabular}{lccc}
\toprule
Agent Count & No & PBRS & ARMS \\
\midrule
 8 & $1.371 \pm 0.690$ & $1.360 \pm 0.743$ & $1.083 \pm 0.594$ \\
16 & $2.875 \pm 1.312$ & $2.958 \pm 1.424$ & $1.955 \pm 0.902$ \\
32 & $7.770 \pm 2.231$ & $6.619 \pm 2.681$ & $4.273 \pm 1.448$ \\
\bottomrule
\end{tabular}
\end{table}

\begin{figure}[h]
    \centering
    \includegraphics[width=0.85\linewidth]{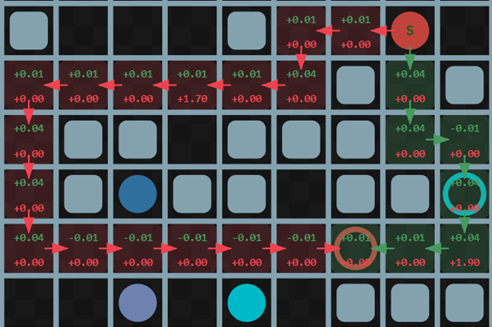}
    \caption{Two sample trajectories captured during an episode, along with the rewards the agent is expected to receive. 
    All neighboring agents remain stationary in this example while the red agent traverses toward its goal. 
    Green values indicate the learned reward (ARMS) while red values indicate the original sparse reward. 
    Since the agent has already partially progressed along its path, the sparse reward accumulated over the previous $20$ timesteps is received at timestep $5$. The rewards are scaled $\times 10$ for visual convenience.
    In the shorter trajectory, the agent reaches the goal exactly at timestep $7$, causing the accumulated sparse reward to be delivered two cells away from the goal. The sum of ARMS' reward at the shorter path is $+0.20$ while the longer path $+0.22$, indicating that higher reward can be accumulated when taking shorter paths.}
    \label{fig:two-trajectories}
\end{figure}

Finally, Figure~\ref{fig:two-trajectories} shows two example trajectories together with the rewards assigned along the path. 
Notably, the learned reward increases more sharply when the agent moves directly toward the goal, and encourages avoidance of other agents, and in this case receiving negative reward when it passes adjacently to them.

\FloatBarrier
\section{Conclusion and Future Work} \label{sec:future-work}

In this work, we presented ARMS, an \textbf{A}utomatic \textbf{R}eward-shaping framework for \textbf{M}ulti-agent \textbf{S}ystems which is able to integrate with any underlying multi-agent reinforcement learning algorithm. It aims to automatically infer rewards in sparse-setting environments by using the agent's sparse reward as a supervision signal to learn a parameterized \textit{dense} reward signal $\Rd_{\theta_i}$ for each agent $i$. 

The framework relies on the baseline algorithm to collect trajectories, and improves the parameterized reward signal by ranking trajectories based on the sparse reward. We theoretically prove that if certain assumptions hold, replacing \textbf{each} agent's reward function with another preserves the set of Nash-equilibria.

In our implementation the policy network parameters are shared across agents. Combined with the conditional best-response analysis, this symmetry motivates sharing the reward-shaping parameters and training the shaping network using observations aggregated from all agents, effectively treating each agent as an interchangeable instance of a representative learner interacting with copies of itself. 

Empirically, ARMS improves sampling efficiency in sparse and delayed reward settings and yields policies with higher cumulative reward. The resulting policies also complete tasks more frequently compared to the baseline. We further observe that the learned policies generalize better to unseen maps, outperforming the baseline across a diverse set of evaluation environments.

Future work may explore allowing the reward network to condition on agent histories. In the current implementation, the shaping function produces a reward at each timestep based only on the agent's local observation and action. While this may be sufficient if some temporal information is implicitly captured by the network parameters, incorporating sequence models that encode observation histories could allow the shaping function to capture temporal dependencies that are not observable from a single timestep.

Future work could also investigate the stability of the reward learning process, including the interaction between trajectory ranking,~reward~delay,~and~policy~optimization~dynamics.

\bibliography{main}
\bibliographystyle{tmlr}

\end{document}